\newcommand{\rd}{{\rm d}}
\newcommand{\ciag}{{\cal C}}
\newcommand{\I}{{\cal I}}
\newcommand{\ei}{i_e}
\newcommand{\eo}{o_e}
\newcommand{\tei}{\tilde{\imath}_e}
\newcommand{\teo}{\tilde{o}_e}
\newcommand{\tfi}{\tilde{\imath}_f}
\newcommand{\tfo}{\tilde{o}_f}
\newcommand{\ti}{{\tilde{\imath}}}
\newcommand{\tj}{{\tilde{\jmath}}}
\newcommand{\tk}{{\tilde{k}}}
\DeclareMathOperator{\tr}{Tr}
\DeclareMathOperator{\Inv}{Inv}
\DeclareMathOperator{\Val}{Val}
\newtheorem{lm}{Lemma}
\newtheorem{thm}{Theorem}
\newtheorem{col}{Corollary}
\newtheorem{fct}{Fact}
\newtheorem{df}{Definition}
\begin{document}

\title{All 3-edge-connected relativistic BC and EPRL spin-networks are integrable}

\author{Wojciech Kami\'nski}
\email{wkaminsk@fuw.edu.pl}
\affiliation{Instytut Fizyki Teoretycznej, Uniwersytet Warszawski, ul. Ho{\.z}a 69, 00-681 Warszawa (Warsaw), Polska (Poland)}


\begin{abstract}
We prove statement conjectured in \cite{BB} 
that every 3-edge-connected $SL(2,{\mathbb C})$ 
spin-network with invariants of certain class is integrable. It means that the regularized evaluation (defined by a suitable integral) of such a spin-network is finite. Our proof is quite general. It is valid for relativistic spin-networks of Barrett and Crane as well as for spin-networks with the Engle-Pereira-Rovelli-Livine intertwiners and for some generalization of both. The result interesting from the group representation point of view opens also a possibility of defining vertex amplitudes for Spin-Foam models based on non-simplicial decompositions.
\end{abstract}
\maketitle

\section{Introduction}

Let $G$ be a compact group. A \emph{spin-network} is an object associated
with a given graph $E$. The graph $E$ consists of nodes (denoted by $E_0$) and oriented links connecting them (denoted by $E_1$). With each link $e$ of the graph we associate an irreducible, unitary representation of $G$ and with each node $i$ an invariant in the tensor product of representations (and dual representations) associated with links outgoing from $i$ (respectively, links incoming to the node $i$). The evaluation of the spin-network is the result of suitable contractions of the invariants. Explicitly, we contract two indices of the given invariants if they correspond to two ends of a link (see~\cite{Penrose} and also~\cite{Baez}).

In the case of $SL(2,{\mathbb C})$, a non-compact group, one encounters several problems with this definition. First of all, invariants are necessarily unbounded objects (see Appendix \ref{inv-sec}). Moreover, even if we are able to define contraction for such objects, the result will be almost always infinite \cite{BC}. The method how to define the evaluation for certain class of such spin-networks was proposed in \cite{BC}. In principle, the result is given by the integral over several copies of $SL(2,{\mathbb C})$ (see Sec. \ref{sec-tech}). We will say that \emph{the spin-network is integrable} if the corresponding integral is absolutely convergent (integral of modulus of the function is finite).

It can be expected from representation-theoretic consideration that this does not work for spin-networks that are not 3-edge-connected (definition \ref{three-con-df}). However, there is a conjecture (stated in \cite{BB}) that the procedure works for all other graphs. Some partial results in this direction have been obtained in \cite{Cherrington}. In this paper we will prove the conjecture in full generality for spin-networks labelled by the Barrett-Crane \cite{BC,Reisenberger} and the Engle-Pereira-Livine-Rovelli \cite{EPRL} (see also \cite{Barrett}) intertwiners. However, our result is valid also for some natural generalization of both (see Eq.~\ref{F-map}).

These invariants are of special importance because the evaluations of the corresponding spin-networks are the vertex amplitudes in the Spin-Foam models \cite{Oeckl,kkl1}. In the case of models based on simplicial decompositions, the vertex graphs are $1$ skeletons of the $4$-simplices. In this case, the finiteness of the evaluation has been proved in \cite{BB,Cherrington,Engle}. Whether these results extend to more general decomposition has been unclear till now.  Our result opens the possibility of defining Lorentzian Spin-Foams in the general framework of cellular decompositions \cite{Oeckl,kkl1}. Our proof should be compared with  methods of Feynman diagrams and also with \cite{kontsevich}, where problem of convergent of a similar integral was solved by a method of suitable compactification of the domain of integration.

Let us mention at the end that in the case of the BC model there are results concerning finiteness of the whole state sum in the case of simplicial decomposition \cite{Rovelli,Cherrington}. We will not consider this issue in the current paper.

\subsection{The evaluation of the spin-network}\label{def-spin-net}

Let us describe a method to regularize the evaluation proposed in \cite{BC,BB}.
In the tensor product $V_{k_1,\rho_1}\otimes\cdots\otimes V_{k_n,\rho_n}$ of $SL(2,{\mathbb C})$ irreducible, unitary representations $V_{k_i,\rho_i}$ from the principal series (see~\cite{Ruhl,Naimark}) we have a subspace
\[
 {\cal F}(V_{k_1,\rho_1}\otimes\cdots\otimes V_{k_n,\rho_n})=\sum_{\{j_1,\ldots,j_n\}} V_{k_1,\rho_1}(j_1)\otimes\cdots\otimes V_{k_n,\rho_n}(j_n),
\]
where $V_{k_i,\rho_i}(j_i)$ is the $SU(2)$ subspace of $V_{k_i,\rho_i}$ of spin $j_i$, and $\sum$ is algebraic direct sum. We will denote by $\cal F$ such vectors in any tensor product of $SL(2,{\mathbb C})$ irreducible representations of the principal series.

If $n\geq 3$ then the formula\footnote{Our normalization of the measure is given by Eq. \ref{dg1} and \ref{dg2}.}
\begin{equation}\label{F-map}
 {\cal F}(V_{k_1,\rho_1}\otimes\cdots\otimes V_{k_n,\rho_n})\ni \iota\rightarrow \int_{SL(2,{\mathbb C})} \rd g\: T_{k_1,\rho_1}(g)\otimes\cdots\otimes T_{k_n,\rho_n}(g) \iota
\end{equation}
gives an invariant \cite{BC}, defined by duality as a functional on ${\cal F}(V_{k_1,\rho_1}\otimes\cdots\otimes V_{k_n,\rho_n})$ (see Appendix \ref{inv-sec}).
We can use $\iota$ to label invariants obtained in this way. Let us note, that this labelling is not one-to-one. Moreover, one can restrict label $\iota$ to the space of $SU(2)$ invariants. We will denote by $[\iota]\in[{\cal F}]$ the class of equivalence of labels that give the same invariant as $\iota$.

For a spin-network with the invariants from the set described above we can define \emph{the evaluation} by the method proposed in \cite{BC,BB}. 
\begin{enumerate}
\item Firstly, we consider the graph with links labelled by the representations of the principal series $(k_e,\rho_e)$ and nodes labelled by elements of ${\cal F}$ in the tensor product of the representations associated with the links meeting in the node (as described in the case of compact groups). Notice that the dual representation 
is isomorphic to the original one, thus it is also in the principal series. However, the isomorphism is not natural and that is why we would like to avoid it in the definition.
\item For every node $i\in E_0$ of the graph except one, from the label $\iota_i\in{\cal F}$ we form the invariant using equation \ref{F-map}. In order to compute the evaluation we contract the invariants obtained in this way with the $\iota_0$ in the chosen node (see Eq. \ref{int-tech} for an exact definition). It follows from Fact \ref{thm-label} (see Sec. \ref{sec-statement}) that the procedure is independent of the choice of specific labels for the given class in $[{\cal F}]$ of the invariants.
\item The evaluation of the spin-network labelled by invariants is equal to the evaluation of a spin-network labelled by \emph{any} representants of these invariants from $\cal F$.
\end{enumerate}
The evaluation depends on the choice of the Haar measure. If we change the measure
\[
 \rd g\rightarrow a\rd g,\ \ a>0\ ,
\]
then in order to obtain the same invariants we need to multiply labels $\iota_i$ by $a^{-1}$. Thus, the evaluation of the spin-network labelled by invariants also changes by factor $a^{-1}$. The situation is a bit less satisfactory than in the case of compact groups, where the choice of probabilistic Haar measure fixes the evaluation completely.

\subsection{Statement of the results}\label{sec-statement}

Let us remind definition.
\begin{df}\label{three-con-df}
Spin-network $E$ is 3-edge-connected if for every division of its node set $E_0$ into two disjoint and nonempty sets $E_0^1$ and $E_0^2$, there exist at least three links connecting nodes of $E_0^1$ with these of $E_0^2$.
\end{df}
The goal of our paper is to prove the theorem stated as a conjecture in \cite{BB} in a bit more general form:
\begin{thm}\label{fin-thm}
Every 3-edge-connected graph with labels of nodes from $[\cal F]$ is integrable.
\end{thm}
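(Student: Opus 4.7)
The strategy is to convert the evaluation of Eq.~\ref{F-map} into an integral over $(H^3)^{|E_0|-1}$, where $H^3 = SL(2,\mathbb{C})/SU(2)$, and then prove its absolute convergence by analyzing how configurations escape to infinity, with 3-edge-connectivity providing exactly the critical exponent. As a first step I would choose, at each node $i$, a representative $\iota_i$ of $[\iota_i]\in[\mathcal F]$ that is $SU(2)$-invariant (possible since $\mathcal F$ admits an $SU(2)$-invariant subspace mapping onto $[\mathcal F]$, as noted in Section~\ref{def-spin-net}). The integrand in the contracted form of Eq.~\ref{F-map} is then invariant under the right action of $SU(2)$ on each integration variable $g_i$, so after gauge-fixing $g_{i_0}=e$ the integral descends to one over $(H^3)^{|E_0|-1}$, and absolute convergence of the original is equivalent to absolute convergence of the reduced one.

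Next I would produce a pointwise bound on the integrand as a product of decay factors depending only on pairwise hyperbolic distances. Each edge $e$ contributes a matrix element of the principal-series representation $(k_e,\rho_e)$ evaluated on $g_{s(e)}^{-1}g_{t(e)}$; by the Cartan decomposition and the standard asymptotics of such matrix elements,
\begin{equation*}
  \bigl|\langle\cdot|T_{k_e,\rho_e}(g_{s(e)}^{-1}g_{t(e)})|\cdot\rangle\bigr| \leq C_e\,(1+d_e)^{N_e}\,e^{-d_e},
\end{equation*}
where $d_e=d_{H^3}(x_{s(e)},x_{t(e)})$. The theorem then reduces to the purely geometric-combinatorial estimate
\begin{equation*}
  \int_{(H^3)^{|E_0|-1}}\prod_{e\in E_1}(1+d_e)^{N_e}\,e^{-d_e}\prod_{i\neq i_0}\rd\mu(x_i) < \infty,
\end{equation*}
a statement about the graph $E$ and hyperbolic geometry alone.

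The core of the proof is to establish this estimate by decomposing $(H^3)^{|E_0|-1}$ into finitely many \emph{escape sectors} indexed by partitions of $E_0\setminus\{i_0\}$ into clusters, each cluster $S$ being assigned either a bounded region or a common boundary point $\xi_S\in\partial H^3$ to which its members accumulate. Working in spherical coordinates centered at $x_{i_0}$, a cluster $S$ escaping to $\xi_S$ at common radial depth $r\to\infty$ must have angular offsets within $S$ of order $e^{-r}$ in order for mutual hyperbolic distances to stay bounded; a Jacobian calculation then yields a volume element of $e^{2r}\,\rd r\,\rd\hat\omega$ for the cluster as a whole (the remaining internal variables integrating over a bounded region), while each of the $k_S$ edges leaving $S$ contributes a decay factor bounded by $e^{-r}$. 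The sector integrand is therefore dominated by $e^{(2-k_S)r}$ (times subleading polynomial factors in $r$), and 3-edge-connectivity applied to the cut $(S,E_0\setminus S)$ gives $k_S\geq 3$, ensuring $\int^{\infty}e^{(2-k_S)r}\,\rd r<\infty$.

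The main obstacle, which I expect to absorb the bulk of the technical work, is to promote this sector-by-sector analysis into a global proof. One must enumerate all nested cluster configurations without double-counting, handle sub-clusters that escape at a faster rate than their parent cluster (requiring intermediate scale parameters), and set up an induction on $|E_0|$ in which the internal integration for each cluster $S$ reduces to the same problem on a smaller 3-edge-connected graph --- for instance the graph obtained from $E$ by contracting $S$ to a single node, where 3-edge-connectivity across the contracted cut survives precisely because $k_S\geq 3$. Making this multi-scale decomposition precise and uniform, and verifying that the polynomial prefactors $(1+d_e)^{N_e}$ do not spoil the marginal convergence at $k_S=3$, is where I expect the real difficulty to lie.
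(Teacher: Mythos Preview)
Your reduction to an integral over copies of $H^3$ and the identification of the $-2+k_S$ exponent via 3-edge-connectivity match the paper's strategy exactly; the reduction step (Section~\ref{sec-tech}) is essentially what you describe. The technical implementation, however, differs in two instructive ways. First, instead of the bound $(1+d_e)^{N_e}e^{-d_e}$, the paper bounds each matrix element by $r/\sinh r\preceq(\cosh r)^{-(1-\tau)}$ with $\tau=1/(2|E_1|)$; sacrificing a tiny bit of decay removes the polynomial prefactor entirely and, more importantly, makes the integrand a product of pure homogeneous factors (the class $\Psi$ in Section~\ref{sec-estimates}), so that every subsequent estimate reduces to power counting. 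Your worry about polynomials spoiling the $k_S=3$ case is thus sidestepped rather than confronted. Second, rather than organizing the analysis around clusters escaping to boundary points, the paper compactifies via $\epsilon_i=1/\eta_{0,i}$ and directions $\vec\xi_i\in S^2$, and then runs an induction not on $|E_0|$ but on \emph{sequences} $I$ of already-integrated variables: at each step one splits the domain according to which of the remaining parameters ($\epsilon_k$ or an angular distance $\theta_{\ti\tj}$) is smallest, integrates it out by Lemma~\ref{integral}, and passes to a contracted graph ${\cal G}_I$. This ``integrate the smallest variable first'' scheme is a concrete realization of the nested multi-scale decomposition you flag as the main obstacle; it treats radial escape and angular collision on the same footing and avoids having to enumerate cluster hierarchies by hand. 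The 3-edge-connectivity enters at the end (Lemma~\ref{lemma-aux3}) exactly as you anticipate: each connected component of the ``coloured'' subgraph contributes $-2+E_\mu(1-\tau)>0$ to the critical exponent $d(I)$.
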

From the theorem one can deduce some special results:

The Barrett-Crane intertwiner $\iota_{BC}\in\Inv V_{0,\rho_1}\otimes\cdots\otimes V_{0,\rho_n}
$ \cite{BC,Reisenberger} is associated with the $\iota\in {\cal F}(V_{0,\rho_1} \otimes\cdots\otimes V_{0,\rho_n})$ of the form
\[
 \iota=v_{0,\rho_1}\otimes\cdots \otimes v_{0,\rho_n},
\]
where $v_{0,\rho_i}$ in each representation $V_{0,\rho_i}$ denotes a unique up to phase $SU(2)$ invariant normalized vector. The relativistic BC spin-network is a spin-network whose links are labelled by representations $(0,\rho_e)$ and nodes are labelled by $\iota_{BC}$.

\begin{col}\label{BC-fin-thm}
Every 3-edge-connected relativistic BC spin-network is integrable.
\end{col}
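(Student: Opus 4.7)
The plan is to deduce the corollary as a direct specialization of Theorem~\ref{fin-thm}. The only non-trivial content is to check that the Barrett--Crane intertwiner $\iota_{BC}$ actually fits into the framework of the main theorem, i.e.\ that the nodes of a relativistic BC spin-network carry labels from $[\mathcal{F}]$. Once this verification is in place, the corollary is immediate.

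First I would unpack the definition of $\iota_{BC}$ given just before the corollary. By construction, $\iota_{BC}$ is the invariant produced from
\[
\iota = v_{0,\rho_1}\otimes\cdots\otimes v_{0,\rho_n}
\]
via the map in Eq.~\ref{F-map}, where each $v_{0,\rho_i}$ is the normalized $SU(2)$-invariant vector in $V_{0,\rho_i}$. The key observation is that being $SU(2)$-invariant in an $SL(2,\mathbb{C})$ principal series representation is equivalent to sitting in the spin-$0$ isotypic subspace, so $v_{0,\rho_i}\in V_{0,\rho_i}(j_i{=}0)$. Consequently
\[
\iota \;\in\; V_{0,\rho_1}(0)\otimes\cdots\otimes V_{0,\rho_n}(0)\;\subset\;\mathcal{F}\bigl(V_{0,\rho_1}\otimes\cdots\otimes V_{0,\rho_n}\bigr),
\]
since each factor belongs to one summand of the algebraic direct sum defining $\mathcal{F}$. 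Thus $\iota$ is a legitimate label in $\mathcal{F}$, and the class $[\iota_{BC}]\in[\mathcal{F}]$ is well defined.

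With this recognized, I would apply Theorem~\ref{fin-thm} to the relativistic BC spin-network under consideration: the graph is 3-edge-connected by hypothesis, and every node is labelled by an element of $[\mathcal{F}]$, namely $[\iota_{BC}]$. The theorem then yields absolute convergence of the regularized integral defined in step~2 of Sec.~\ref{def-spin-net}, which is exactly the statement of integrability. The independence of the choice of representative within a class, granted by Fact~\ref{thm-label}, lets one freely identify the evaluation computed abstractly from $\iota_{BC}$ with the one obtained by inserting the explicit tensor product $\bigotimes_i v_{0,\rho_i}$ at each node.

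There is essentially no real obstacle at the level of the corollary; all the difficulty is packaged inside Theorem~\ref{fin-thm} itself. The only delicate point is the verification that the Barrett--Crane vector genuinely lies in $\mathcal{F}$ rather than merely in the topological completion of it, which is why I insist on the remark that $SU(2)$-invariance forces membership in a single $j_i{=}0$ summand. After that, the corollary is simply Theorem~\ref{fin-thm} applied to a specific family of labels.
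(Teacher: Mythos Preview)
Your proposal is correct and matches the paper's own reasoning: the corollary is stated immediately after Theorem~\ref{fin-thm} as a direct specialization, the only content being the observation (spelled out in the paragraph preceding the corollary) that the Barrett--Crane label $\iota=v_{0,\rho_1}\otimes\cdots\otimes v_{0,\rho_n}$ lies in $\mathcal{F}$ because each $v_{0,\rho_i}\in V_{0,\rho_i}(0)$. No separate proof is given in the paper beyond this remark, so your write-up is, if anything, slightly more detailed than the original.
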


A similar situation occurs in the case of the EPRL models.
The map of \cite{EPRL} (see also \cite{Barrett}) is a map from the $SU(2)$ invariants into $SL(2,{\mathbb C})$ invariants for each node of the graph. The EPRL intertwiner map
\[
 \iota_{EPRL} \colon\Inv V_{k_1}\otimes\cdots\otimes V_{k_n}\rightarrow \Inv V_{k_1,\rho_1}\otimes\cdots\otimes V_{k_n,\rho_n}
\]
is the composition of the map into $\cal F$
\begin{equation}\label{EPRL-map}
V_{k_1}\otimes\cdots\otimes V_{k_n}\xrightarrow{\phi_{k_1}\otimes\cdots\otimes\phi_{k_n}} 
{\cal F}(V_{k_1,\rho_1}\otimes\cdots\otimes V_{k_n,\rho_n})
\end{equation}
restricted to $\Inv V_{k_1}\otimes\cdots\otimes V_{k_n}$ with \ref{F-map}\footnote{In fact, additional restriction on the numbers $k_e,\rho_e$ is imposed (see \cite{EPRL}) and usually nodes are assumed to be 4-valent (but see \cite{kkl1}).}. The map $\phi_{k}\colon V_k\rightarrow V_{k,\rho}$ is a unique (up to unitary equivalence) embedding of the  $SU(2)$ representation $V_k$ into $V_{k,\rho}$. The relativistic EPRL spin-network is a spin-network whose links and nodes are labelled by representations $(k_e,\rho_e)$ from the principal series and, respectively, by images of the EPRL maps. We have
\begin{col}\label{EPRL-fin-thm}
Every 3-edge-connected relativistic EPRL spin-network is integrable.
\end{col}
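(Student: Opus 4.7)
The plan is to observe that Corollary~\ref{EPRL-fin-thm} is a direct specialization of Theorem~\ref{fin-thm}: once I exhibit, for each EPRL intertwiner, a representative label lying in $\cal F$, the corollary follows at once. I would therefore focus the argument on producing such a label; the analytic heart of the problem is packaged entirely inside Theorem~\ref{fin-thm}.

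Unpacking the definition, for any $\iota_{SU(2)}\in\Inv V_{k_1}\otimes\cdots\otimes V_{k_n}$ the intertwiner $\iota_{EPRL}(\iota_{SU(2)})$ is by construction the $SL(2,\mathbb{C})$ invariant produced by the averaging map~\ref{F-map} applied to
\[
 \iota:=(\phi_{k_1}\otimes\cdots\otimes\phi_{k_n})(\iota_{SU(2)}).
\]
Since each $\phi_k\colon V_k\rightarrow V_{k,\rho}$ is an $SU(2)$-equivariant embedding of an irreducible representation, Schur's lemma forces its image to coincide with the spin-$k$ isotypic subspace $V_{k,\rho}(k)\subset V_{k,\rho}$. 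Consequently $\iota$ lies in the single summand $V_{k_1,\rho_1}(k_1)\otimes\cdots\otimes V_{k_n,\rho_n}(k_n)$ of the algebraic direct sum defining $\cal F$, so $\iota\in\cal F$, and moreover $\iota$ is $SU(2)$-invariant, as required by the prescription of Sec.~\ref{def-spin-net}. Its class $[\iota]$ provides a label in $[\cal F]$ for the node in question.

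By Fact~\ref{thm-label} the evaluation is insensitive to the choice of representative, so I may replace the EPRL label at every node of the spin-network by its $\cal F$-representative $\iota$ produced above. The resulting object is a 3-edge-connected graph labelled by $[\cal F]$-classes, and Theorem~\ref{fin-thm} applies verbatim to deliver absolute convergence of the regularized evaluation. I do not expect any real obstacle in this deduction; the only step requiring representation-theoretic care is the image identification of $\phi_k$, which is elementary. The substantive work of the paper lies in proving Theorem~\ref{fin-thm} itself, whereas the EPRL corollary is a conceptual translation rather than a new analytic claim.
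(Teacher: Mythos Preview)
Your proposal is correct and matches the paper's own reasoning: the corollary is deduced directly from Theorem~\ref{fin-thm} by observing that the EPRL intertwiner at each node is, by definition, the image under the averaging map~\ref{F-map} of an $SU(2)$-invariant vector lying in ${\cal F}$, so every node label belongs to $[{\cal F}]$. Your explicit use of Schur's lemma to pin down the image of $\phi_k$ in $V_{k,\rho}(k)$ simply makes explicit what the paper leaves implicit in its description of the map~\ref{EPRL-map}.
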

Let us also state
\begin{fct}\label{thm-label}
The evaluation of the 3-edge-connected graph $E$ depends only on the classes of equivalence $[\iota_i]$, $i\in E_0$ of the labels of the nodes of the graph.
\end{fct}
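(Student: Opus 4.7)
The plan is to reduce Fact \ref{thm-label} to two observations: (i) the evaluation is independent of the choice of distinguished node $i_0$, and (ii) whenever $i_0 \neq i$, the $\iota_i$-dependence of the evaluation factors through the invariant produced by \ref{F-map}.

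For (i), I would write the evaluation in the form
\[
\int \prod_{j \neq i_0} \rd g_j \: C\bigl(\{T(g_j)\iota_j\}_{j \neq i_0}, \iota_{i_0}\bigr),
\]
where $C$ denotes the edge contractions. Introducing a dummy variable $g_{i_0}$ and replacing $\iota_{i_0}$ by $T(g_{i_0})\iota_{i_0}$ extends the integrand to one that is invariant under the simultaneous left shift $g_j \mapsto g\,g_j$ for every $j$, since each edge pairing $\langle T(g_j)\iota_j, T(g_k)\iota_k\rangle_e$ is $SL(2,\mathbb{C})$-invariant by unitarity of the principal series. The absolute convergence supplied by Theorem \ref{fin-thm} justifies Fubini and the change of variables needed to transfer the gauge-fixing $g_{i_0}=1$ to the condition $g_{i_1}=1$ at any other node $i_1$, yielding (i). For (ii), given $i_0 \neq i$, I would apply Fubini to perform the $g_i$-integration first; since the integrand depends on $\iota_i$ linearly through $T(g_i)\iota_i$, this inner integral equals $F(\iota_i)$ paired, via the edges meeting $i$, with the residual tensor formed by the remaining factors. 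Combining (i) and (ii) and applying the argument to each node in turn shows that the evaluation depends on every label $\iota_i$ only through its class $[\iota_i]$.

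The main obstacle I expect lies in the functional-analytic meaning of the pairing in (ii). The invariant $F(\iota_i)$ is \emph{a priori} a distribution on ${\cal F}$ (see Appendix~\ref{inv-sec}), whereas the residual tensor at the edges meeting $i$ involves factors $T(g_j)\iota_j$ which generally leave the algebraic sum ${\cal F}$. One must therefore show that after performing the remaining integrations the residual object lies in the space on which $F(\iota_i)$ is well-defined as a functional, and that when $F(\iota_i)=0$ the pairing genuinely vanishes. I expect the absolute-convergence estimates already needed in the proof of Theorem \ref{fin-thm} — in particular their uniform control over the tails of the $SU(2)$-spin decomposition — to provide exactly what is needed to make this identification rigorous.
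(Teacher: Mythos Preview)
Your approach is essentially the paper's own: reduce by linearity to the case $[\iota_1]=[0]$, fix a node other than $1$ (the paper simply cites \cite{BB} for your step (i)), invoke Fubini via the absolute convergence supplied by Theorem~\ref{fin-thm} to perform the $g_1$-integral first, and identify that inner integral with the pairing \ref{inv-eq}. The functional-analytic subtlety you flag---that the residual tensor lies outside ${\cal F}$---is not addressed in the paper either; it simply asserts that the resulting factor has the form \ref{inv-eq} with $\iota=\iota_1$ and therefore vanishes.
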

We will prove it in Appendix \ref{inv-sec}. This fact allows us to regard the evaluation as a procedure on spin-networks labelled by $SL(2,{\mathbb C})$ invariants described in Eq.~\ref{F-map}.

\section{Technical prelude}
\label{sec-tech}

In this section we derive Theorem \ref{fin-thm} from technical Theorem \ref{tech}.
We will use the following notation
\begin{itemize}
\item $E_0$ is the set of nodes of the graph labelled by elements of $\cal F$. Nodes will be denoted by $i,j,k$,
\item $E_1$ is the set of oriented links of the graph. None of the links starts and ends in the same node (if there is any we can freely erase it). Links are labelled by representations of the Lorentz  group $(k_e,\rho_{e})$ from the principal series. For a link $e$ we denote by $\ei$ , $\eo$ nodes connected by the link. The link starts at $\eo$ and ends in $\ei$.
\end{itemize}
We will also denote
\begin{equation}
\tau=\frac{1}{2|E_1|}\ .\label{tau-def}
\end{equation}

We restrict ourselves to the case, when all labels are simple tensors. General case can be simply inferred by linearity, because all elements in $\cal F$ are in fact finite linear combinations of such tensors.
Let $\iota_i\in {\cal F}(V^{\kappa_1}_{k_{e_1},\rho_{e_1}}\otimes\cdots\otimes V^{\kappa_n}_{k_{e_n},\rho_{e_n}})$ for $i\in E_0$ be of the form
\[
 \iota_i=v^{i,e_1}\otimes\cdots\otimes v^{i,e_n}.
\]
Vectors $v^{i,e_l}$ (associated with the node $i$ and the link $e_l$ having $i$ as its endpoint) are in the $SU(2)$ subrepresentation of spin $j^{i,e_l}$ of $V^{\kappa_l}_{k_{e_l},\rho_{e_l}}$. The representation $V^{\kappa_l}_{k_{e_l},\rho_{e_l}}$ is equal to $V_{k_{e_l},\rho_{e_l}}$ if the link $e_l$ is outgoing and to the dual $(V_{k_{e_l},\rho_{e_l}})^*$ if the link is incoming to the node $i$.

The evaluation (in our normalization of the measure) is equal to the integral \cite{Engle,Barrett}
\begin{equation}\label{int-tech}
 \int_{SL(2,{\mathbb C})^{|E_0|-1}} \prod_{i\in E_0\setminus\{0\}} \rd g_i  \prod_{e\in E_1} \left( v^{\ei,e},\ T_{k_e,\rho_e}(g_{\ei}^{-1} g_{\eo})\ v^{\eo,e}\right)_{k_e,\rho_e}\ ,
\end{equation}
where $\ei$ and $\eo$ are the nodes connected by the link $e$ and $(\cdot,\cdot)_{k_e,\rho_e}$ is a natural pairing between $(V_{k_{e},\rho_{e}})^*$ and $V_{k_{e},\rho_{e}}$. We put $g_0={\mathcal I}$. 

There exists an antilinear isomorphism
${\cal A}_{k_e,\rho_e}\colon (V_{k_{e},\rho_{e}})^*\xrightarrow{\approx} V_{k_{e},\rho_{e}}$ \cite{Barrett}. It satisfies
\[
 (v^*,v)_{k_e,\rho_e}=\langle {\cal A}_{k_e,\rho_e}(v^*),\ v\rangle_{k_e,\rho_e}\quad \forall{v^*\in (V_{k_{e_l},\rho_{e_l}})^*,v\in V_{k_{e_l},\rho_{e_l}}},
\]
where $\langle\cdot,\cdot\rangle_{k_e,\rho_e}$ is the hermitian inner product.
This map sends the subspace $(V_{k_{e},\rho_{e}})^*(j)$ into $V_{k_{e},\rho_{e}}(j)$ for all $j\geq k_e$.
We can estimate
\begin{equation}\label{est-1}
\begin{split}
|(v^{{\ei},e},\ T_{k_e,\rho_e}(g_{\ei}^{-1} g_{\eo})&\ v^{{\eo},e})_{k_e,\rho_e}|=|\langle {\cal A}_{k_e,\rho_e}(v^{{\ei},e}),\ T_{k_e,\rho_e}(g_{\ei}^{-1} g_{\eo})\ v^{{\eo},e}\rangle_{k_e,\rho_e}|\\
&\leq|{\cal A}_{k_e,\rho_e}(v^{{\ei},e})|_{k_e,\rho_e}|v^{{\eo},e}|_{k_e,\rho_e} \sup |\langle v^{e},\ T_{k_e,\rho_e}(g_{\ei}^{-1} g_{\eo})\ w^{e}\rangle_{k_e,\rho_e}| 
\end{split}
\end{equation}
where the supremum is taken over all $v^{e}\in V_{k_e,\rho_e}(j^{\ei,e})$, and $w^{e}\in V_{k_e,\rho_e}(j^{\eo,e})$ such that $|v^{e}|_{k_e,\rho_e}=|w^{e}|_{k_e,\rho_e}=1$.

The element $g_{\ei}^{-1} g_{\eo}$ can be written as 
\begin{equation}\label{def-r}
 g_{\ei}^{-1} g_{\eo}=u\left[\begin{array}{cc}
                          e^{r/2} & 0\\ 0 & e^{-r/2}
                         \end{array}
\right] v,
\end{equation}
where $u$ and $v$ belong to $SU(2)$. The value of $r>0$ is uniquely determined by this equation. In fact, $r$ can be computed from the equation
\[
 \cosh r=\frac12 \tr g_{\ei}^{-1} g_{\eo}\left(g_{\ei}^{-1} g_{\eo}\right)^{\dagger},
\]
where the trace is in the $SL(2,{\mathbb C})$ defining representation ${\mathbb C}^2$.
Thus, in the estimation \ref{est-1} we have
\[
\sup |\langle v^{e},\ T_{k_e,\rho_e}(g_{\ei}^{-1} g_{\eo})\ w^e\rangle_{k_e,\rho_e}|= \sup \left|\left\langle v^{e},\ T_{k_e,\rho_e}\left(\left[\begin{array}{cc}
                          e^{r/2} & 0\\ 0 & e^{-r/2}
                         \end{array}
\right]\right)\ w^e\right\rangle_{k_e,\rho_e}\right|.
\]
The last term can be estimated by
\[
 \max_{m=-j_e\ldots j_e} \left|\left\langle v^{j^{\ei,e}}_m, T_{k_e,\rho_e}\left(\left[\begin{array}{cc}
                          e^{r/2} & 0\\ 0 & e^{-r/2}
                         \end{array}
\right]\right) v^{j^{\eo,e}}_m\right\rangle_{k_e,\rho_e}\right|
\]
where $v^{j^{\ei,e}}_m$ and $v^{j^{\eo,e}}_m$ are the bases of eigenvectors of $T_{k_e,\rho_e}(L_z)$ (labelled by eigenvalues $m$) in the $SU(2)$ subspaces of $V_{k_e,\rho_e}$ corresponding to $j^{\ei,e}$ and $j^{\eo,e}$. We can restrict ourselves to these matrix elements because $T_{k_e,\rho_e}\left(\left[\begin{array}{cc}
                          e^{r/2} & 0\\ 0 & e^{-r/2}
                         \end{array}
\right]\right)$ is diagonal in these bases. We use notation $j_e=\min j^{\ei,e},j^{\eo,e}$.

According to \cite{Ruhl} (Eq. 3-32) we have
\begin{equation}\label{Ruhl-int}
 \begin{split}
&\left\langle 
v^{j^{\ei,e}}_m, T_{k_e,\rho_e}\left(
\left[\begin{array}{cc}
                          e^{r/2} & 0\\ 0 & e^{-r/2}
                         \end{array}\right]\right)
v^{j^{\eo,e}}_m\right\rangle_{k_e,\rho_e}
=\sqrt{(2j_{\ei}+1)(2j_{\eo}+1)}\\
&\phantom{uuuu}\int_0^1 \rd t\ d^{j^{\ei,e}}_{k_em}(2t-1) d^{j^{\eo,e}}_{k_em}\left(\frac{te^{-r}-(1-t)e^r}{te^{-r}+(1-t)e^r}\right)(te^{-r}+(1-t)e^r)^{i\rho_e/2}\frac{1}{te^{-r}+(1-t)e^r}\ ,
 \end{split}
\end{equation}
where $d^{j}_{k_em}$ is equal to the matrix element in the representation of $SU(2)$ with spin $j$
\[
 d^{j}_{k_em}(\cos\theta)=\left\langle v^j_{k_e}, T_j\left[\begin{array}{cc}
                                                      \cos\frac{\theta}{2} & \sin\frac{\theta}{2}\\
-\sin\frac{\theta}{2}& \cos\frac{\theta}{2}
                                                     \end{array}
\right]v^j_m\right\rangle\ .
\]
We can estimate integral \ref{Ruhl-int} by
\begin{equation}\label{est-D}
 C_m\int_0^1 \rd t \frac{1}{te^{-r}+(1-t)e^r}=C_m\frac{r}{\sinh r}\leq D\frac{1}{(\cosh r)^{1-\tau}}\ ,
\end{equation}
where $C_m$ ($m=-j_e,\ldots, j_e$) and $D$ ($m$ independent) are constants. We should stress that $D$ depends on the choice of $j^{\ei,e}$, $j^{\eo,e}$ and on $|E_1|$ but can be chosen $\rho_e$ independent.

The measure on $SL(2,{\mathbb C})$ can be decomposed into measure $\rd u$ on $SU(2)$ and $\rd \mu$ on the space of boosts (every element $g\in SL(2,{\mathbb C})$ can be uniquely written as $g=\mu u$ where $u\in SU(2)$ and $\mu$ is a boost)
\begin{equation}\label{dg1}
 \rd g=\rd u\rd \mu,\ \ \int_{SU(2)}\rd u=1\ .
\end{equation}
Elements of the space of boosts can be identified with either the coset space $SL(2,{\mathbb C})/SU(2)$ or with $3$ dimensional unit hyperboloid ${\cal H}_+$ (future part). Element of the hyperboloid is associated with a unique boost that transform vector $(1,0,0,0)$ into it.

On the hyperboloid we have natural coordinates given by $\eta=(\eta_0,\vec{\eta})$ with a constraint $\eta_0^2-\vec{\eta}^2=1$. The measure is as follows
\begin{equation}\label{dg2}
 \rd\mu
=2\delta(\eta_0^2-\vec{\eta}^2-1)\rd\eta_0\cdots\rd\eta_3
=\frac{1}{\eta_0}\rd\eta_1\rd\eta_2\rd\eta_3\ .
\end{equation}
The distance $r$ between two points with coordinates $\eta^0,\eta^1$ can be computed from the equation
\begin{equation}
 \cosh r=\eta^0_0\eta^1_0-\vec\eta^0\vec\eta^1\ .
\end{equation}
Alternatively, one can take two elements of $g^0,g^1\in SL(2,{\mathbb C})$ whose classes in the coset space correspond to points $\eta^0$, $\eta^1$, then $r$ is defined by \ref{def-r} for $(g^0)^{-1}g^1$.

Now we can estimate (up to a constant factor depending on $|E_1|$ and on the labels of nodes) the integral \ref{int-tech} by
\[
\int\prod_{i\in E_0\setminus\{0\}} \rd u_i\rd\mu_i \prod_{e\in E_1} \left(\frac{1}{\cosh r_{\ei\eo}}\right)^{1-\tau}=\int\prod_{i\in E_0\setminus\{0\}} \rd\mu_i \prod_{e\in E_1} \left(\frac{1}{\cosh r_{\ei\eo}}\right)^{1-\tau}\ ,
\]
where $r_{\ei\eo}$ is the distance between nodes $\ei$ and $\eo$ on the unit hyperboloid. The integrated function is $u_i$ independent and we could integrate out $SU(2)$ factors.

Hence, it is enough to prove
\begin{thm}\label{tech}
For every 3-edge-connected graph the integral
\begin{equation}
\int_{{\cal H}_+^{|E_0|-1}}\prod_{i\in E_0\setminus\{0\}} \rd\mu_i \prod_{e\in E_1} \left(\frac{1}{\cosh r_{\ei\eo}}\right)^{1-\tau},
\end{equation}
where $r_{\ei\eo}$ is the distance on the hyperboloid between nodes $\ei$ and $\eo$,
is finite.
\end{thm}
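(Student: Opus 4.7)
My plan is to reduce Theorem~\ref{tech} via the elementary estimate $(\cosh r)^{-(1-\tau)}\le 2^{1-\tau}e^{-(1-\tau)r}$ to the finiteness of the integral $\int_{{\cal H}_+^{n}}\prod_{i\ne 0}\rd\mu_i\prod_{e\in E_1} e^{-(1-\tau)r_{\ei\eo}}$, where $n=|E_0|-1$. I parametrize $\mu_i=(\cosh r_i,\sinh r_i\,\hat n_i)$ with $\hat n_i\in S^2$, so $\rd\mu_i=\sinh^2(r_i)\,\rd r_i\,\rd\hat n_i$, and partition ${\cal H}_+^n$ into the $n!$ regions $R_\pi=\{r_{\pi(1)}\le\cdots\le r_{\pi(n)}\}$; by symmetry it suffices to bound the integral on one such region.

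On $R_\pi$, write $s_e<t_e$ for the two $\pi$-indices of the endpoints of edge $e$, and let $\theta_e$ denote the angle between $\hat n_{\pi(s_e)}$ and $\hat n_{\pi(t_e)}$. The cosine-law identity
\[
\cosh r_e=\cosh(r_{\pi(t_e)}-r_{\pi(s_e)})+2\sinh r_{\pi(s_e)}\sinh r_{\pi(t_e)}\sin^2(\theta_e/2)
\]
then yields the refined lower bound $\cosh r_e\gtrsim e^{r_{\pi(t_e)}-r_{\pi(s_e)}}(1+e^{2r_{\pi(s_e)}}\sin^2(\theta_e/2))$, which factors each edge's decay into a ``radial-gap'' piece $e^{-(1-\tau)(r_{\pi(t_e)}-r_{\pi(s_e)})}$ and an ``angular'' piece whose $\hat n_{\pi(t_e)}$-integral over $S^2$ contributes an additional $O(e^{-2r_{\pi(s_e)}})$, via the elementary identity $\int_{S^2}(1+R^2\sin^2(\theta/2))^{-s}\,\rd\hat n=O(R^{-2})$ for $s>1$, $R\to\infty$. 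I integrate the angular variables $\hat n_{\pi(k)}$ in reverse order $k=n,n-1,\ldots,1$, harvesting at each step the gain from the edges having $\pi(k)$ as their higher endpoint.

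The 3-edge-connectivity is invoked through the fact that for every $k\in\{1,\ldots,n\}$ the prefix cut separating $\{0,\pi(1),\ldots,\pi(k-1)\}$ from $\{\pi(k),\ldots,\pi(n)\}$ contains at least three edges. In the level-gap coordinates $\Delta_l=r_{\pi(l)}-r_{\pi(l-1)}\ge 0$ the measure contributes $\exp\!\bigl(\sum_l 2(n-l+1)\Delta_l\bigr)$ while the radial-gap factors contribute $\exp\!\bigl(-(1-\tau)\sum_l|\mathrm{Cut}_l|\Delta_l\bigr)$ with $|\mathrm{Cut}_l|\ge 3$. Combined with a suitable distribution of the angular gains $e^{-2r_{\pi(s_e)}}$ among the levels $l\le s_e$ that each edge's cut-crossings traverse, these terms are to make every $\Delta_l$-exponent strictly negative for $\tau=1/(2|E_1|)$.

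The main obstacle I anticipate is precisely this combinatorial assignment: the angular gain of each edge is ``spent'' only once and must be routed so that every prefix cut still accrues enough decay to beat $2(n-l+1)$. Here 3-edge-connectivity enters via Menger's theorem, which guarantees three edge-disjoint paths across each prefix cut and thus provides exactly the flow needed to make the assignment feasible. A subsidiary technicality is controlling the joint $S^2$-integrals when several far-endpoint directions $\hat n_{\pi(t_e)}$ cluster near a common $\hat n_{\pi(s_e)}$; the $e^{2r_{\pi(s_e)}}$-regularization in the refined bound confines each singular contribution to an angular cone of radius $\sim e^{-r_{\pi(s_e)}}$, whose overlaps can be estimated directly without disturbing the combinatorial bookkeeping above.
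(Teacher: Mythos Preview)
Your plan has a real gap in the angular step. The identity $\int_{S^2}(1+R^2\sin^2(\theta/2))^{-s}\rd\hat n=O(R^{-2})$ requires $s>1$, but each edge contributes exponent $s=1-\tau<1$; for a single edge the integral is only $O(R^{-2(1-\tau)})$. Worse, when several downward edges share the same top vertex $\pi(k)$, one angular integration cannot harvest a separate $e^{-2r_{\pi(s_e)}}$ from each: H\"older with $\sum_j 1/p_j=1$ shows the total exponent gain from integrating $\hat n_{\pi(k)}$ is at most $2$, to be \emph{shared} among those edges. So your $\Delta_l$-exponent is at best
\[
2(n-l+1)-(1-\tau)|\mathrm{Cut}_l|-2\,\bigl|\{k:\exists\,j,\ s_j(k)\ge l\}\bigr|,
\]
and this is not always negative. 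Take the triangular prism (two triangles $\{0,1,2\}$, $\{3,4,5\}$ joined by the matching $03,14,25$; it is $3$-edge-connected) with the ordering $\pi=(1,2,3,4,5)$. At level $l=1$ the measure gives $+10$, the cut has size $3$, and only nodes $2,4,5$ have a downward edge landing at level $\ge 1$ (nodes $1$ and $3$ connect downward only to $0$), so the exponent is $10-3(1-\tau)-6=1+3\tau>0$. Your radial--then--angular scheme therefore produces a divergent upper bound on this ordered sector, and no routing via Menger can repair it, because the shortfall is in the \emph{total} angular budget, not its distribution.

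The paper avoids this by interleaving the two kinds of integration rather than separating them. Working in the equivalent variables $\epsilon_i\approx e^{-r_i}$ and $\theta_{ij}$, it does not fix a radial ordering but instead peels off, at each step, whichever parameter among all $\epsilon_i$ and all $\theta_{\tilde\imath\tilde\jmath}$ is currently smallest; integrating a small $\epsilon$ removes a node, while integrating a small $\theta$ \emph{merges} the two endpoints into one node of a contracted graph. The resulting inductive family of integrals is controlled by a single scaling exponent $d(I)$, and $3$-edge-connectedness enters exactly once, to show that each connected component of the ``coloured'' subgraph contributes at least $-2+3(1-\tau)>0$ to $d(I)$. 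The merging step is precisely what captures the clustering of angular directions that your per-node $S^2$-integration cannot exploit.
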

The proof of this theorem occupies Sec. \ref{proof}.

\section{Proof of the theorem \ref{tech}}\label{proof}
In this section we will prove the technical result, theorem \ref{tech}, that is main ingredient of the proof of Baez-Barrett conjecture.
\vskip 0.3cm

\noindent {\it Sketch of the proof:} The main idea is to estimate the integral from Theorem \ref{tech} by a sequence of other integrals introduced in \ref{sec-integrals} (see Lemma~\ref{lm-est1}). It is done by a division of the range of integration into some smaller regions. The integral in each region is estimated separately. In order to do that, we need to introduce some methods of estimations (see Sec.~\ref{sec-prelim} and \ref{sec-estimates}). Finally, by an inductive procedure we estimate the initial integral by a finite constant (see Lemma~\ref{cond-d}). Conditions for Lemma~\ref{lm-est1} to hold (see also Lemma~\ref{cond-d}) are verified in Sec.~\ref{sec-graph}. Here the 3-edge-connectedness comes into play (see Lemma~\ref{lemma-aux3}).

\subsection{Preliminaries}\label{sec-prelim}
We will use the following abbreviations for $f,g$ nonnegative functions or measures
\begin{itemize}
 \item $f\approx g$ (equivalence)
if there exists constant $C>0$ such that $\frac{1}{C}g\leq f\leq Cg$ for whole range of variables.
\item $f\preceq g$ (estimate)
if there exists constant $C>0$ such that $f\leq Cg$ for whole range of variables.
\end{itemize}
Suppose that we have positive functions $f_1\approx f_2$ and $g_1\approx g_2$ then
\[
 f_1g_1\approx f_2g_2,\ f_1^\alpha\approx f_2^\alpha,\ {\rm for\ } \alpha\in{\mathbb R}\ .
\]
Similarly, for $f_1\preceq f_2$ and $g_1\preceq g_2$
\[
 f_1g_1\preceq f_2g_2,\ f_1^\alpha\preceq f_2^\alpha,\ {\rm for\ } \alpha>0\ .
\]
{\it Remark:} Later on we will use the following convention: By $f(\{y_l\}_{l=0\ldots n})$ we mean a function depending on the set $\{y_l\}_{l=0\ldots n}$ of variables. If we write (for example)
\[
g(z)=f(\{y_l=0\}_{l\not=3},y_3=z),
\]
then function $g(z)$ is obtained by putting in $f$ all $y_l=0$ except $y_3=z$.

We introduce new coordinates $\epsilon$, $\vec{\xi}$ on the hyperboloid
\begin{itemize}
 \item $\epsilon=\frac{1}{\eta_0}$ has range $(0,1]$
\item $\vec{\xi}=\frac{\vec{\eta}}{|\eta|}$ 
are coordinates on the two dimensional sphere ($3$-dimensional vector of norm $1$), $|\eta|=\sqrt{\eta_1^2+\eta_2^2+\eta_3^2}$.
\end{itemize}
We have the following expression for the measure
\[
 \rd\mu=\frac{\sqrt{1-\epsilon^2}}{\epsilon^3} \rd\epsilon\rd^2\xi\ ,
\]
where $\rd^2\xi$ is the normal measure on the sphere.

Finally, the measure is estimated by
\begin{equation}
\rd\mu\preceq \epsilon^{-3}\ \rd\epsilon \rd^2\xi 
\end{equation}

\begin{lm}
We have equivalence for the distance $r_{ij}$ on the hyperboloid between two nodes $i$ and $j$ (described by $(\epsilon_i,\vec{\xi}_i)$ and $(\epsilon_j,\vec{\xi}_j)$)
\begin{equation}
 \frac{1}{\cosh r_{ij}}\approx \frac{\epsilon_i\epsilon_j}{\epsilon_i^2+\epsilon_j^2+\theta_{ij}^2},
\end{equation}
where $\theta_{ij}$ is the distance on the sphere between points given by $\vec{\xi}_i,\vec{\xi}_j$.
\end{lm}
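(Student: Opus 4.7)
The plan is to express $\cosh r_{ij}$ in the new coordinates and reduce the claim to an elementary equivalence of real-valued expressions. Using $\eta_0 = 1/\epsilon$ and $|\vec\eta| = \sqrt{\eta_0^2 - 1} = \sqrt{1-\epsilon^2}/\epsilon$, together with $\vec\eta^{\,i}\cdot\vec\eta^{\,j} = |\vec\eta^{\,i}|\,|\vec\eta^{\,j}|\cos\theta_{ij}$, the distance formula gives
\[
\cosh r_{ij} = \eta^i_0\eta^j_0 - \vec\eta^{\,i}\cdot\vec\eta^{\,j}
= \frac{1 - a_{ij}\cos\theta_{ij}}{\epsilon_i\epsilon_j},
\qquad a_{ij} := \sqrt{(1-\epsilon_i^2)(1-\epsilon_j^2)}\in[0,1].
\]
Thus the claim reduces to the equivalence $1 - a_{ij}\cos\theta_{ij} \approx \epsilon_i^2 + \epsilon_j^2 + \theta_{ij}^2$ uniformly in $(\epsilon_i,\epsilon_j,\theta_{ij})\in(0,1]^2\times[0,\pi]$.

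For this, I would use the split $1 - a\cos\theta = (1-a) + a(1-\cos\theta)$. The first term is rewritten via $1 - a = (1-a^2)/(1+a)$, where $1-a^2 = \epsilon_i^2+\epsilon_j^2-\epsilon_i^2\epsilon_j^2$. Since $\epsilon_i^2\epsilon_j^2 \le \min(\epsilon_i^2,\epsilon_j^2) \le \tfrac12(\epsilon_i^2+\epsilon_j^2)$, one has $\tfrac12(\epsilon_i^2+\epsilon_j^2)\le 1-a^2\le \epsilon_i^2+\epsilon_j^2$, and $1+a\in[1,2]$, so $1-a \approx \epsilon_i^2+\epsilon_j^2$. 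The second term uses the standard equivalence $1-\cos\theta = 2\sin^2(\theta/2) \approx \theta^2$ for $\theta\in[0,\pi]$.

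To complete the argument I would perform a short case analysis on $a = a_{ij}$. If $a\ge\tfrac12$, then both summands behave as expected and summing gives directly $1 - a\cos\theta_{ij} \approx \epsilon_i^2+\epsilon_j^2+\theta_{ij}^2$. If $a<\tfrac12$, then $1-a^2>\tfrac34$ forces $\epsilon_i^2+\epsilon_j^2>\tfrac34$, so the right-hand side $\epsilon_i^2+\epsilon_j^2+\theta_{ij}^2$ lies in the compact interval $[\tfrac34,\tfrac34+\pi^2]$, while the left-hand side $1-a\cos\theta_{ij}\in[1-a,1+a]\subset[\tfrac12,\tfrac32]$; both are then bounded above and below by positive constants, giving the equivalence trivially. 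Combining the two cases yields the lemma after dividing by $\epsilon_i\epsilon_j$.

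The only delicate point is the case split: one has to notice that the $a(1-\cos\theta_{ij})$ piece can fail to be comparable to $\theta_{ij}^2$ when $a$ is small, but exactly in that regime the trivial two-sided bound takes over because $\epsilon_i^2+\epsilon_j^2$ is already of order one. No estimates involving $\tau$ or the graph structure enter here; this is a purely geometric comparison between the hyperboloid metric and the product coordinates $(\epsilon,\vec\xi)$.
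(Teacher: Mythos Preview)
Your argument is correct. The only slip is cosmetic: in the case $a<\tfrac12$ the upper endpoint of the interval for $\epsilon_i^2+\epsilon_j^2+\theta_{ij}^2$ should be $2+\pi^2$ rather than $\tfrac34+\pi^2$, since $\epsilon_i^2+\epsilon_j^2$ can reach $2$; this has no effect on the conclusion.

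Your route differs from the paper's. The paper computes the same expression for $\cosh r_{ij}$, then observes that the ratio
\[
\frac{1-\sqrt{1-\epsilon_i^2}\sqrt{1-\epsilon_j^2}\cos\theta_{ij}}{\epsilon_i^2+\epsilon_j^2+\theta_{ij}^2}
\]
is a continuous function on the compact set $[0,1]^2\times[0,\pi]$ except possibly at the origin, and then shows by a second-order Taylor expansion that the limit at $(0,0,0)$ equals $\tfrac12$. Compactness then gives the two-sided bound without any further case analysis. Your approach instead decomposes $1-a\cos\theta=(1-a)+a(1-\cos\theta)$, proves the separate equivalences $1-a\approx\epsilon_i^2+\epsilon_j^2$ and $1-\cos\theta\approx\theta^2$ by elementary inequalities, and handles the loss of the factor $a$ in front of $(1-\cos\theta)$ via a case split on whether $a\geq\tfrac12$. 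The paper's argument is shorter and more conceptual; yours is more explicit and yields concrete constants without appealing to compactness or to the existence of a limit.
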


\begin{proof}
We compute
\[
\frac{1}{\cosh r_{ij}}= \frac{\epsilon_i\epsilon_j} {1-\sqrt{1-\epsilon_i^2}\sqrt{1-\epsilon_j^2}(\vec{\xi}_i\vec{\xi}_j)}=\frac{\epsilon_i\epsilon_j}{1-\sqrt{1-\epsilon_i^2}\sqrt{1-\epsilon_j^2}\cos\theta_{ij}},
\]
where $r_{ij}$ is the hyperbolic distance between points described by $(\epsilon_i,\vec{\xi}_i)$ and $(\epsilon_j,\vec{\xi}_j)$. The scalar product between two vectors $\vec{\xi}_i$, $\vec{\xi}_j$ is denoted by $\vec{\xi}_i\vec{\xi}_j=\cos\theta_{ij}$.
The ratio
\[
 \frac{\frac{\epsilon_i\epsilon_j}{\epsilon_i^2+\epsilon_j^2+\theta_{ij}^2}}
{\frac{\epsilon_i\epsilon_j}{1-\sqrt{1-\epsilon_i^2}\sqrt{1-\epsilon_j^2}\cos\theta_{ij}}}
=\frac{1-\sqrt{1-\epsilon_i^2}\sqrt{1-\epsilon_j^2}\cos\theta_{ij}}{\epsilon_i^2+\epsilon_j^2+\theta_{ij}^2}
\]
can blow up (or go to zero) only for $\epsilon_i=\epsilon_j=\theta_{ij}=0$. Due to the compact range of parameters (we extend $\epsilon\in[0,1]$), in order to prove the lemma it is enough to show that there exists a limit
\[
\lim_{(\epsilon_i,\epsilon_j,\theta_{ij})\rightarrow (0,0,0)} \frac{1-\sqrt{1-\epsilon_i^2}\sqrt{1-\epsilon_j^2}\cos\theta_{ij}}{\epsilon_i^2+\epsilon_j^2+\theta_{ij}^2}\ .
\]
Let us introduce a distance $r$ from the point $(0,0,0)$, $r=\sqrt{\epsilon_i^2+\epsilon_j^2+\theta_{ij}^2}$. We introduce
\[
\epsilon_i=r\tilde\epsilon_i,\ \ \epsilon_j=r\tilde\epsilon_j,\ \ \theta_{ij}=r\tilde\theta_{ij},\ \ 
\tilde\epsilon_i^2+\tilde\epsilon_j^2+\tilde\theta_{ij}^2=1.
\]
In the limit $r\rightarrow 0$ we have
\[
 \lim_{r\rightarrow 0}
\frac{1-\sqrt{1-r^2\tilde\epsilon_i^2}\sqrt{1-r^2\tilde\epsilon_j^2}\cos(r^2\tilde\theta_{ij})}{r^2(\tilde\epsilon_i^2+\tilde\epsilon_j^2+\tilde\theta_{ij}^2)}=
\lim_{r\rightarrow 0}
\frac{\frac{1}{2}r^2(\tilde\epsilon_i^2+\tilde\epsilon_j^2+\tilde\theta_{ij}^2)+O(r^4)}{r^2(\tilde\epsilon_0^2+\tilde\epsilon_1^2+\tilde\theta_{01}^2)}=\frac12\ .
\]
\end{proof}
As a result we obtain
\[
 \left(\frac{1}{\cosh r_{ij}}\right)^{1-\tau}\approx \left(\frac{\epsilon_i\epsilon_j}{\epsilon_i^2+\epsilon_j^2+\theta_{ij}^2}\right)^{1-\tau}\ .
\]
The integral, that we would like to estimate, is estimated ($\preceq$) by an integral
\begin{equation}
 \label{int-simpl}
 \int \prod_{i\not=0} \frac{\rd\epsilon_i}{\epsilon_i^3} \prod_{i} \rd^2\xi_i \prod_{e\in E_1} 
\left(\frac{\epsilon_{\ei}\epsilon_{\eo}}{\epsilon_{\ei}^2+\epsilon_{\eo}^2+\theta_{\ei\eo}^2}\right)^{1-\tau},
\end{equation}
where $\theta_{\ei\eo}$ is the distance on the sphere between the end points $\vec{\xi}_{\ei}$ and $\vec{\xi}_{\eo}$ of the link $e$.

We need to fix $\epsilon_0=1$ and similarly we need to fix $\vec{\xi}_0$. Integration over the latter does not change the value because the integral \ref{int-simpl} is invariant with respect to the simultaneous rotation of all $\vec{\xi}$. So, the integration over $\vec{\xi}_0$ gives only the common factor of the area of the sphere (finite), and we will perform it.

\subsection{Integrals}\label{sec-integrals}

We consider a set of sequences of pairwise different elements from the disjoint union of $E_0\setminus\{0\}$ (the set of nodes without node $0$) and the set of links $E_1$
\[
 \left(E_0\setminus\{0\}\right)\amalg E_1,
\]
such that if we consider the set of links belonging to the sequence then there is no loops i.e. disregarding nodes, links form a tree (maybe disconnected).
The set of such sequences we denote by $\ciag$. The length of a sequence $I\in\ciag$ is denoted by $|I|$. Elements of the sequence $I$ that belong to $E_0$ we will call nodes. Elements belonging to $E_1$ we call links. We denote by $I+ \{i\}$ (respectively $I+\{e\}$) the sequence $I$ with added node $i$ (link $e$, respectively) at the end.

With every element $I\in\ciag$ we associate the graph ${\cal G}_I$ constructed as follows: We take the initial graph and merge every two nodes that are connected by a path of links from $I$. In the graph just obtained we erase every link that starts and ends in the same node. Notice, that graph ${\cal G}_\emptyset$ is in fact the initial graph. We will denote by $E_0({\cal G}_I)$ (and $E_1({\cal G}_I)$) the set of nodes (and respectively the set of links) of the graph ${\cal G}_I$.

We can regard each node $\ti\in E_0({\cal G}_I)$ as a set of nodes of the initial graph that are merged into $\ti$. Let sequence $J$ be a prolongation of the sequence $I$. We will write $\ti\subset\ti'$ for $\ti\in E_0({\cal G}_I)$ and $\ti'\in E_0({\cal G}_J)$ if the relation of inclusion holds for the corresponding sets. Moreover, for every $\ti\in E_0({\cal G}_I)$ there is a unique node $[\ti]$ in ${\cal G}_J$ such that $\ti\subset [\ti]$. For each $\ti\in E_0({\cal G}_I)$ the number $N_{\ti}$ is the number of nodes of the initial graph that are merged into $\ti$.

We will also write $(\ti,\tj) \in E_1({\cal G}_I)$ if $\ti\in E_0({\cal G}_I)$ and $\tj\in E_0({\cal G}_I)$ are connected by a link in ${\cal G}_I$. For any link $f\in E_1({\cal G}_I)$ we denote by $\tfi$ and $\tfo$ nodes of ${\cal G}_I$ connected by $f$. 
\vskip 0.35cm

We associate with $I$ a positive function $f_I(\chi,\{\epsilon_i\}_{i\notin I},\{\theta_{\ti\tj}\}_{\ti,\tj\in E_0({\cal G}_I),\: (\ti,\tj)\in E_1({\cal G}_I)})$. Notice that $\epsilon_i$ are labelled by nodes of the initial graph, but $\theta_{\ti\tj}$ by unordered pairs of nodes of the graph ${\cal G}_I$. We associate with $I$ also an integral
\begin{equation}\label{def-integrals}
 \I_I=\int \prod_{i\notin I} \rd\epsilon_i \prod_{\ti\in E_0({\cal G}_I)} \rd^2\xi_{\ti}\: f_I(\chi=\min\left[\{\epsilon_i\}_{i\notin I}\cup\{\theta_{\tei\teo}\}_{e\in E_1({\cal G}_I)}\right],\{\epsilon_i\}_{i\notin I},\{\theta_{\ti\tj}\}_{(\ti,\tj)\in E_1({\cal G}_I)})
\end{equation}
In the integral $0\leq \theta_{\ti\tj}\leq \pi$ is the distance on the sphere between points $\vec{\xi}_\ti$ and $\vec{\xi}_\tj$ and can be computed by the equation
$\cos \theta_{\ti\tj}=\vec{\xi}_\ti\vec{\xi}_\tj$. 

The function is defined inductively. We take as $f_{\emptyset}$ the function
\[
 \prod_{i\in E_0}\frac{1}{\epsilon_i^3}\prod_{e\in E_1} 
\left(\frac{\epsilon_{\ei}\epsilon_{\eo}}{\epsilon_{\ei}^2+\epsilon_{\eo}^2+\theta_{\ei\eo}^2}\right)^{1-\tau}
\]
This function does not depend on $\chi$. Because the graph ${\cal G}_\emptyset$ is the initial graph, we use $\ei,\eo$ instead of $\tei,\teo$ in the indices of $\theta_{\ei\eo}$.

Suppose we have constructed $f_I$ for the given $I$ then 
\begin{itemize}
 \item for $J=I+\{k\}$, $k$ being a node, we put
\[
 f_J(\chi,\{\epsilon_i\}_{i\notin J},\{\theta_{\ti\tj}\}_{(\ti,\tj)\in E_1({\cal G}_I)})=\chi f_I(\chi,\{\epsilon_i\}_{i\notin I\cup\{k\}},\epsilon_k=\chi,\{\theta_{\ti\tj}\}_{(\ti,\tj)\in E_1({\cal G}_I)})\ ,
\]
\item for $J=I+\{e\}$, with $e\in E_1$,
\[
 f_J(\chi,\{\epsilon_i\}_{i\notin I},\{\theta_{\ti'\tj'}\}_{(\ti',\tj')\in E_1({\cal G}_J)})=
\chi^2f_I(\chi,\{\epsilon_i\}_{i\notin I},
\{\theta_{\ti\tj}=\theta_{[\ti][\tj]}\}_{([\ti],[\tj])\in E_1({\cal G}_J)},\theta_{\tei\teo}=\chi)
\]
Notice, that the graph ${\cal G}_J$ is obtained from ${\cal G}_I$ by merging nodes $\tei$ and $\teo$ (and erasing suitable links), so in the right-hand side function above we put as $\theta_{\ti\tj}$
\[
 \theta_{\ti\tj}=\left\{\begin{array}{ll}
                         \chi & {\rm if}\ \{\ti,\tj\}=\{\tei,\teo\}\\
                         \theta_{[\ti][\tj]} & {\rm otherwise.}
                        \end{array}\right.
\]
\end{itemize}
The integral \ref{int-simpl} corresponds to the empty sequence and is equal to $\I_\emptyset$.

\subsection{Estimates}\label{sec-estimates}

Let us define a class of functions $\Psi$.
\begin{df}
Function $f(\vec{y})$ of the variables $y_0,\ldots,y_n\ >0$ belongs to the class $\Psi$ if it is a product of some powers (real) of $y_\alpha$, $\alpha=0,\ldots n$ and (real positive) powers of  $\frac{1}{y_\alpha^2+y_\beta^2+y_\gamma^2}$, $\alpha,\beta,\gamma=0\ldots n$.
\end{df}

All the functions $f_I$ for $I\in\ciag$ belong to the class $\Psi$. This class has some useful properties.

\begin{lm}\label{lm-lambda}
If $f(\vec{y})\in\Psi$ then there exists a unique $\lambda\in{\mathbb R}$ such that for every fixed values of $y_1,y_2,\ldots,y_n>0$ there exist a nonzero, finite limit
\[
 \lim_{y_0\rightarrow 0} \frac{f(\vec{y})}{y_0^\lambda}.
\]
\end{lm}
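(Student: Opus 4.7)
\noindent\emph{Proof plan for Lemma \ref{lm-lambda}:} The plan is to exploit the very explicit structure of $\Psi$ and simply read off the exponent $\lambda$ from the way each factor of $f$ degenerates when $y_0\to 0$ while $y_1,\ldots,y_n$ stay fixed and strictly positive. By definition of $\Psi$ we can write
\[
 f(\vec y)=\prod_{\alpha=0}^n y_\alpha^{a_\alpha}\:\prod_{s\in S}\frac{1}{\bigl(y_{\alpha(s)}^2+y_{\beta(s)}^2+y_{\gamma(s)}^2\bigr)^{b_s}},
\]
with $a_\alpha\in\mathbb R$, $b_s>0$ and $S$ a finite index set.

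First I would analyze the monomial part: for $\alpha\ge 1$ the factor $y_\alpha^{a_\alpha}$ tends to a strictly positive finite constant, while $y_0^{a_0}$ contributes exactly $y_0^{a_0}$. Next I would split the product over $S$ into two disjoint pieces. Let $S_0\subset S$ be the set of triples $s$ for which $\alpha(s)=\beta(s)=\gamma(s)=0$, and $S_1=S\setminus S_0$ the triples in which at least one of the three indices is positive. For $s\in S_0$ the factor equals $3^{-b_s}y_0^{-2b_s}$. For $s\in S_1$ the sum $y_{\alpha(s)}^2+y_{\beta(s)}^2+y_{\gamma(s)}^2$ is bounded below by the square of whichever $y_i$ ($i\ge 1$) occurs in the triple, hence is bounded away from $0$ as $y_0\to 0$, so the factor tends to a strictly positive finite limit depending only on $y_1,\ldots,y_n$.

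Combining these observations, define
\[
 \lambda\;:=\;a_0\;-\;2\sum_{s\in S_0}b_s.
\]
Then $f(\vec y)/y_0^\lambda$ is, up to a factor that converges to a nonzero finite constant $C(y_1,\ldots,y_n)>0$, equal to $1$, which establishes existence of a nonzero finite limit. Uniqueness of $\lambda$ is immediate: if $\lambda'$ had the same property, then $y_0^{\lambda-\lambda'}=\bigl(f/y_0^{\lambda'}\bigr)/\bigl(f/y_0^\lambda\bigr)$ would converge to a nonzero finite value as $y_0\to 0$, which forces $\lambda=\lambda'$.

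There is no real obstacle here beyond bookkeeping; the only point that must be checked carefully is the dichotomy $S=S_0\amalg S_1$, which reduces to the trivial observation that a sum of three non-negative squares vanishes in the limit $y_0\to 0$ only when every variable involved is $y_0$ itself. Once that is in hand, the value of $\lambda$ and the constant $C$ can be read off directly from the defining product.
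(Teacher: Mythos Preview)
Your proof is correct and follows essentially the same approach as the paper's: analyze each factor of $f$ separately, observing that $y_\alpha^{a_\alpha}$ contributes $a_0$ or $0$ to $\lambda$ according as $\alpha=0$ or not, and that $(y_{\alpha}^2+y_{\beta}^2+y_{\gamma}^2)^{-b}$ contributes $-2b$ or $0$ according as $\alpha=\beta=\gamma=0$ or not. Your write-up is slightly more detailed (explicit uniqueness argument, the constant $3^{-b_s}$), but the underlying argument is identical.
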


\begin{proof}
It is enough to show that for every factor in $f(\vec{y})$ there exists such a unique constant. For $y_\alpha^{\kappa}$ this constant is $\kappa$ (if $\alpha=0$) or $0$ (if $\alpha\not=0$). For
$\left(\frac{1}{y_\alpha^2+y_\beta^2+y_\gamma^2}\right)^\kappa$ it is $-2\kappa$ (if $\alpha=\beta=\gamma=0$) or $0$ (otherwise).
\end{proof}

We will use it in the definition
\begin{df}
For $f(\vec{y})\in\Psi$ the value $\lambda$ obtained in the lemma \ref{lm-lambda} will be denoted by 
\[
 \Val_{y_0} f:=\lambda\ .
\]
\end{df}

\begin{lm}\label{integral}
Let $f(\vec{y})\in\Psi$ and $C>0$ then for every $\vec{y}$ such that $y_0\leq\min\{y_1,\dots,y_n\}$ we have
\begin{equation}
 \int_0^{Cy_0} f(\vec{z}) \rd z_0\approx f(\vec{y})y_0\ ,
\end{equation}
if $\Val_{y_0}(f(\vec{y})y_0)>0$. In the integration remaining variables $y_1,\ldots, y_n$ are fixed. We treat the resulting integral as a function of parameters $\vec{y}$ in the domain where the condition $y_0\leq\min\{y_1,\dots,y_n\}$ is satisfied (and in this sense we use $\approx$).
\end{lm}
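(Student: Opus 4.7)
The plan is to show that on the range $z_0\in[0,Cy_0]$ the integrand $f(\vec z)$ factors, up to equivalence $\approx$, as $M(\vec y)\cdot z_0^{\mu}$, where $M$ is $z_0$-independent and $\mu=\Val_{y_0} f$; the $z_0$-integral is then an elementary power integral. The hypothesis $y_0\leq \min\{y_1,\dots,y_n\}$ is what allows us to ``freeze'' those factors of $f$ in which $z_0$ appears together with other variables.

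Concretely, I would write $f(\vec y)$ as a product of its elementary factors of type $y_\alpha^\kappa$ and $(y_\alpha^2+y_\beta^2+y_\gamma^2)^{-\kappa}$ and classify each factor according to whether the index $0$ appears. Factors in which $0$ does not appear are $z_0$-independent and contribute to $M$. Factors in which $0$ is the only index, namely $z_0^\kappa$ or $(3z_0^2)^{-\kappa}$, are pure powers of $z_0$, and by the definition of $\Val_{y_0}$ in Lemma~\ref{lm-lambda} the sum of their exponents equals $\mu=\Val_{y_0} f$. The interesting case is a factor $(y_\alpha^2+y_\beta^2+y_\gamma^2)^{-\kappa}$ in which $0$ appears as some but not all indices, say $\alpha=0$ with $\beta\neq 0$. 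Since $z_0\leq Cy_0\leq C y_\beta$ by the hypothesis, we get $z_0^2+y_\beta^2+y_\gamma^2\approx y_\beta^2+y_\gamma^2$ uniformly in $z_0\in[0,Cy_0]$, and the identical estimate with $y_0$ in place of $z_0$ gives $y_0^2+y_\beta^2+y_\gamma^2\approx y_\beta^2+y_\gamma^2$. Hence this mixed factor is $\approx$ to a $z_0$-independent quantity which is itself $\approx$ to the corresponding factor in $f(\vec y)$.

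Collecting all factors yields $f(\vec z)\approx M(\vec y)\,z_0^{\mu}$ on $[0,Cy_0]$ with $M(\vec y)\,y_0^{\mu}\approx f(\vec y)$. Under the assumption $\mu+1=\Val_{y_0}(f(\vec y)y_0)>0$ the elementary integral $\int_0^{Cy_0}z_0^\mu\,\rd z_0=(Cy_0)^{\mu+1}/(\mu+1)\approx y_0^{\mu+1}$ is finite and positive, and multiplying out gives
\[
 \int_0^{Cy_0} f(\vec z)\,\rd z_0 \approx M(\vec y)\cdot y_0^{\mu+1} \approx f(\vec y)\cdot y_0,
\]
which is the desired equivalence.

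The main obstacle is not the calculation itself but uniformity: one must check that the implicit constants in each $\approx$ relation can be chosen uniformly on the whole region $\{y_0\leq \min_i y_i\}$. Uniformity for each mixed factor holds because the ratio $z_0^2/y_\beta^2$ is bounded by $C^2$; composing over the finite number of elementary factors of $f$ preserves uniformity, so a single constant depending only on $C$ and on the exponents appearing in $f$ suffices across the entire domain. The minor bookkeeping subcases (several indices coinciding, the same variable occurring in several factors) reduce to the argument above because $\Psi$ is closed under products and $\Val_{y_0}$ is additive under multiplication.
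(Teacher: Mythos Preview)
Your argument is correct and is essentially the same as the paper's: both isolate the pure $z_0$-power part of $f$ (contributing the exponent $\lambda=\Val_{y_0}f$) and then show that each remaining factor $(z_\alpha^2+z_\beta^2+z_\gamma^2)^{-\kappa}$ with not all indices equal to $0$ is uniformly $\approx (y_\alpha^2+y_\beta^2+y_\gamma^2)^{-\kappa}$ on $z_0\in[0,Cy_0]$, reducing the integral to $\int_0^{Cy_0}z_0^{\lambda}\,\rd z_0$. The paper packages this as $f=F\,z_0^{\lambda}$ with $F(\vec z)\approx F(\vec y)$ and gives the explicit comparison constants $1/(1+C^2)$ and $3(1+C^2)$, but the substance is identical to what you wrote.
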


\begin{proof}
The function $f$ can be written as
\[
 f(\vec{z})=F(\vec{z}) z_0^\lambda\ ,
\]
where $F(\vec{z})$ has nonzero limit for $z_0\rightarrow 0$ (all other parameters fixed) and $\lambda=\Val_{y_0}f$.
We have for $0<z_0\leq Cy_0$
\begin{equation}\label{eq-zy}
 \frac{1}{z_\alpha^2+z_\beta^2+z_\gamma^2}\approx \frac{1}{y_\alpha^2+y_\beta^2+y_\gamma^2}
\end{equation}
on the domain of integration, if at least one of $\alpha$, $\beta$ or $\gamma$ is not $0$. 
Indeed, then
\[
 \frac{1}{1+C^{2}}\frac{1}{y_\alpha^2+y_\beta^2+y_\gamma^2}\leq \frac{1}{z_\alpha^2+z_\beta^2+z_\gamma^2}\leq 3(1+C^2) \frac{1}{y_\alpha^2+y_\beta^2+y_\gamma^2}\ .
\]
If all $\alpha,\beta,\gamma$ in the expression $\frac{1}{z_\alpha^2+z_\beta^2+z_\gamma^2}$ are $0$ then it will not appear in $F(\vec{z})$ (because it is a power of $z_0$). Similarly, $z_\alpha\approx y_\alpha$ for $\alpha\not=0$ (because then $z_\alpha=y_\alpha$). The integral is thus equivalent to
\[
 \int_0^{Cy_0} f(\vec{z}) \rd z_0\approx \int_0^{Cy_0} F(\vec{y})z_0^\lambda\rd z_0\approx F(\vec{y})y_0^{\lambda+1}=f(\vec{y})y_0
\]
if $\Val_{y_0} (f(\vec{y})y_0)=\lambda+1>0$.
\end{proof}

\begin{lm}\label{lemma-aprox2}
Suppose that we have $f(\vec{y})\in\Psi$ on some domain and the family of functions ${z_\alpha}(\vec{y})_{\alpha=0\ldots n}$ satisfying
\[
 \forall_\alpha {z_\alpha}(\vec{y})\approx y_\alpha\ ,
\]
then $f(\vec{y})\approx f(\vec{z}(\vec{y}))$ on the domain.
\end{lm}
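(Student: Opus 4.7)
The plan is to reduce the statement to a factor-by-factor check, exploiting the two multiplicative stability properties of $\approx$ already noted just after Lemma 2 in Section~\ref{sec-prelim}: if $f_1\approx f_2$ and $g_1\approx g_2$ then $f_1g_1\approx f_2g_2$, and $f_1^\alpha\approx f_2^\alpha$ for any $\alpha\in\mathbb{R}$ (with $\alpha>0$ required only for the weaker relation $\preceq$; for $\approx$ all real exponents are allowed since one can invert). By the very definition of the class $\Psi$, the function $f(\vec{y})$ is a finite product of factors of just two types: $y_\alpha^\kappa$ with $\kappa\in\mathbb{R}$, and $(y_\alpha^2+y_\beta^2+y_\gamma^2)^{-\kappa}$ with $\kappa>0$. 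So it suffices to show the claimed equivalence for each factor individually and then take the product.

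First I would handle the monomial factors. Since $z_\alpha(\vec{y})\approx y_\alpha$, applying the exponentiation rule with exponent $\kappa\in\mathbb{R}$ gives $z_\alpha(\vec{y})^\kappa\approx y_\alpha^\kappa$ directly. No further work is needed for these factors.

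Next I would handle the rational factors. By assumption there is a constant $C>0$, uniform on the domain, such that $C^{-1}y_\alpha\le z_\alpha(\vec{y})\le Cy_\alpha$ for every $\alpha$. Squaring and summing the three relevant indices gives
\[
C^{-2}\bigl(y_\alpha^2+y_\beta^2+y_\gamma^2\bigr)\le z_\alpha(\vec{y})^2+z_\beta(\vec{y})^2+z_\gamma(\vec{y})^2\le C^{2}\bigl(y_\alpha^2+y_\beta^2+y_\gamma^2\bigr),
\]
so $z_\alpha^2+z_\beta^2+z_\gamma^2\approx y_\alpha^2+y_\beta^2+y_\gamma^2$. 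Raising both sides to the power $-\kappa$ (again using the stability of $\approx$ under real exponentiation) yields the desired equivalence for this factor.

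Finally, multiplying together the finitely many per-factor equivalences, and using stability of $\approx$ under products, gives $f(\vec z(\vec y))\approx f(\vec y)$ on the domain, which is the statement of the lemma. There is no real obstacle here: the only point to notice is that the decomposition of an element of $\Psi$ into these elementary factors is immediate from the definition, so the argument is a routine verification once the two multiplicative stability properties of $\approx$ are in hand.
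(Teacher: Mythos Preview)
Your proof is correct and follows essentially the same approach as the paper: a factor-by-factor verification using the multiplicative stability of $\approx$, checking separately the monomial factors $y_\alpha^\kappa$ and the quadratic-sum factors $(y_\alpha^2+y_\beta^2+y_\gamma^2)^{-\kappa}$. The only cosmetic difference is that the paper keeps track of possibly different equivalence constants $C_\alpha$ for each coordinate, whereas you pass to a single uniform $C$; since there are finitely many indices this is harmless.
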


\begin{proof}
Let the constant of equivalence $y_\alpha\approx z_\alpha$ be denoted by $C_\alpha$, then we have
\[
 \frac{1}{C_\alpha^2+C_\beta^2+C_\gamma^2}\frac{1}{z_\alpha^2+z_\beta^2+z_\gamma^2}\leq
\frac{1}{y_\alpha^2+y_\beta^2+y_\gamma^2}\leq \frac{{C_\alpha^2+C_\beta^2+C_\gamma^2}}{z_\alpha^2+z_\beta^2+z_\gamma^2}\ .
\]
Powers of $y_\alpha$ can be treated similarly. Because relation $\approx$ is preserved by multiplication the lemma follows.
\end{proof}

\subsection{Inequalities}

For the given $I\in\ciag$ we denote by $d(I)$
\[
 d(I):=\Val_{\chi} f_I\ .
\]
We have the following lemma

\begin{lm}\label{lm-est1}
For a given sequence $I\in\ciag$
\begin{equation}
 \I_I\preceq 1+\sum_{|J|=|I|+1} \I_J\ .
\end{equation}
if $d(J)>0$ for all $J\in\ciag$ such that $|J|=|I|+1$.
\end{lm}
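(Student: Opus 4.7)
\vskip 0.3cm\noindent\emph{Proof plan.} The strategy is to decompose the domain of integration of $\I_I$ according to which variable achieves the minimum $\chi$, and then to integrate that variable out using Lemma~\ref{integral}. For each $\alpha$ in the set $\{\epsilon_i\}_{i\notin I}\cup\{\theta_{\tei\teo}\}_{e\in E_1({\cal G}_I)}$ whose minimum defines $\chi$, let $\Omega_\alpha$ be the (pairwise disjoint, up to null sets) subset on which $\alpha$ realizes this minimum. On $\Omega_\alpha$ one has $\chi=\alpha$, so the integrand becomes a function of $\alpha$ that still lies in the class $\Psi$, and I plan to perform the $\alpha$-integration first, from $0$ up to the minimum of the remaining variables.

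In the node case $\alpha=\epsilon_k$ this is a direct one-variable integral. In the link case $\alpha=\theta_{\tei\teo}$ I reparametrize $\vec{\xi}_\teo$ using geodesic polar coordinates around $\vec{\xi}_\tei$, so that $\rd^2\xi_\teo=\sin\theta\,\rd\theta\,\rd\phi\approx\theta\,\rd\theta\,\rd\phi$ in the relevant small-$\theta$ regime (automatic on $\Omega_e$ since $\chi\leq\epsilon_0=1$); the $\phi$-integration contributes a finite constant, and the extra factor of $\theta$ from the sphere measure is exactly what produces the $\chi^2$ prefactor appearing in the inductive definition of $f_J$. In both cases the hypothesis of Lemma~\ref{integral}, namely
\[
 \Val_{\epsilon_k}\!\bigl(\epsilon_k\cdot f_I|_{\chi=\epsilon_k}\bigr)>0\quad\text{(node)}\qquad\text{or}\qquad \Val_{\theta}\!\bigl(\theta^2\cdot f_I|_{\chi=\theta,\,\theta_{\tei\teo}=\theta}\bigr)>0\quad\text{(link)},
\]
is by the very definition of $f_J$ identical to $\Val_\chi f_J=d(J)>0$, the standing assumption. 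Applying the lemma, the innermost integration produces $f_J(\chi=\min,\ldots)$ up to $\approx$.

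After the innermost integration, the outer integration over the remaining variables matches the integral defining $\I_J$. In the node case ${\cal G}_J={\cal G}_I$ and the remaining variable set and measure are inherited unchanged. In the link case the polar change of variables has collapsed $\vec{\xi}_\teo$ onto $\vec{\xi}_\tei$, which is precisely the geometric merging that defines ${\cal G}_J$; the self-loop generated from $e$ drops out. Summing the contributions from all the $\Omega_\alpha$ yields $\I_I\preceq\sum_{|J|=|I|+1}\I_J$. The additive $1$ in the claim absorbs the contribution of a compact subregion in which every variable entering the definition of $\chi$ stays bounded below by a fixed positive constant, so that the integrand is bounded and the measure finite; this region is not controlled by Lemma~\ref{integral}, which is sharp only in the small-$\chi$ regime.

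The main obstacle I expect is the link case: carrying out the polar change of variables on the 2-sphere precisely, and verifying that after the innermost integration both the remaining variable set and the measure genuinely coincide with those in the definition of $\I_J$ for the merged graph ${\cal G}_J$. A secondary technical point is to check that the substitution $\chi=\alpha$ keeps the restricted integrand within the class~$\Psi$, so that Lemma~\ref{integral} is directly applicable and its Val-condition translates verbatim into $d(J)>0$.
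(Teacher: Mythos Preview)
Your approach matches the paper's: decompose the domain by which variable realizes the minimum $\chi$, integrate that variable out via Lemma~\ref{integral}, and the additive $1$ comes from the region where $\epsilon_0=1$ is the minimum (so every remaining variable is $\geq 1$ and the integrand is bounded on a bounded set). The only cosmetic difference is that the paper centers the polar change of variables at the midpoint $\vec\xi_{e_e}=(\vec\xi_{\tei}+\vec\xi_{\teo})/|\vec\xi_{\tei}+\vec\xi_{\teo}|$ rather than at one endpoint; either choice works.

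The obstacle you flagged in the link case is real and is the one place your sketch is incomplete. The polar change of variables does \emph{not} by itself collapse $\vec\xi_{\teo}$ onto the merged node: after the change, $f_I$ still depends on the individual distances $\theta_{\tk\tei}$ and $\theta_{\tk\teo}$ to neighbours $\tk$, and these are genuine functions of $(\theta_e,\phi)$, not of the merged-node position alone. The paper's fix is to observe that on the region where $\theta_e$ is minimal the triangle inequality gives $\theta_{\tk\tei}\approx\theta_{\tk e_e}\approx\theta_{\tk\teo}$ with equivalence constant~$2$, and then to invoke Lemma~\ref{lemma-aprox2} to replace $f_I$ by the function in which every $\theta_{\ti\tj}$ is set to $\theta_{[\ti][\tj]}$. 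Only after this replacement is the integrand $\phi$-independent and a function of the merged-graph variables, so that the remaining integrations are those of $\I_{I+\{e\}}$. A secondary consequence is that the natural upper limit for the $\theta_e$-integration (the minimum of the \emph{unmerged} distances) is no longer exactly $\chi_{I+\{e\}}$ but only $\leq 2\chi_{I+\{e\}}$; the paper enlarges the range accordingly and applies Lemma~\ref{integral} with $C=2$.
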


\noindent{\it Remark:} In fact, summation can be restricted to the one step prolongations of $I$. The integrals $\I_J$ are positive numbers, maybe infinite. As usual, by $\preceq$ we mean inequality up to a constant finite factor.

\begin{proof}
We divide the area of integration in $\I_I$ into regions (maybe overlaping), depending on what is the smallest parameter from the set
\begin{equation}
 \{ \epsilon_i\colon i\notin I\}\cup \{\theta_{\ti\tj}\colon \ti,\tj\in E_0({\cal G}_I),\:  (\ti,\tj)\in E_1({\cal G}_I)\}
\end{equation}
In each region we estimate the integral by other methods.

There are three different sorts of cases
\begin{itemize}
 \item the smallest parameter is $\epsilon_k$ for some $k\not=0$,
 \item the smallest parameter is $\theta_{\tei\teo}$ for some ${e}\in E_1({\cal G}_I)$.
 \item the smallest parameter is $\epsilon_0$,
\end{itemize}
In the first case we can write the integration over the region where $\epsilon_k\leq \chi_{I+\{k\}}=\min \{ \epsilon_i\colon i\notin I,\ i\not=k\}\cup \{\theta_{\ti\tj}\colon \ti,\tj\in E_0({\cal G}_I),\: (\ti,\tj)\in E_1({\cal G}_I)\}$ as
\[ 
 \int \prod_{i\notin I,\ i\not=k} \rd\epsilon_i \prod_{\ti\in E_0({\cal G}_I)} \rd^2\xi_{\ti} \int_0^{\chi_{I+\{k\}}}\rd \epsilon_k\: f_I(\chi=\epsilon_k,\{\epsilon_i\}_{i\notin I\cup \{k\}},\epsilon_k,\{\theta_{\ti\tj}\}_{(\ti,\tj)\in E_1({\cal G}_I)})
\]
because in this region $\chi=\epsilon_k$. We have for $\chi'\leq\chi_{I+\{k\}}$
\[
\begin{split}
 \int_0^{\chi'}\rd\epsilon_k\: f_I(\chi=\epsilon_k,&\{\epsilon_i\}_{i\notin I\cup\{k\}},\epsilon_k,\{\theta_{\ti\tj}\}_{(\ti,\tj)\in E_1({\cal G}_I)})\approx\\
&\approx\chi' f_I(\chi=\chi',\{\epsilon_i\}_{i\notin I\cup\{k\}},\epsilon_k=\chi',\{\theta_{\ti\tj}\}_{(\ti,\tj)\in E_1({\cal G}_I)})
\end{split}
\]
by lemma \ref{integral} if $d(I+\{k\})>0$. The resulting integral is equivalent as a function to $f_{I+\{k\}}$. The remaining integrations are as in $\I_{I+\{k\}}$ (notice that ${\cal G}_I={\cal G}_{I+\{k\}}$)
\[
\int \prod_{i\notin I\cup\{k\}} \rd\epsilon_i \prod_{\ti\in E_0({\cal G}_I)} \rd^2\xi_{\ti}\ 
\chi_{I+\{k\}}
f_I(\chi=\chi_{I+\{k\}},\{\epsilon_i\}_{i\notin I\cup\{k\}},\epsilon_k=\chi_{I+\{k\}},\{\theta_{\ti\tj}\}_{(\ti,\tj)\in E_1({\cal G}_I)})
\]
and so the integral on this region can be estimated ($\preceq$) by $\I_{I+\{k\}}$ (in the formula above $\chi_{I+\{k\}}$ is a function of the remaining variables).

In the second case, notice that $e$ connects two nodes $\tei$ and $\teo$ that are distinct in ${\cal G}_I$ (we erased all links starting and ending in the same node). We can regard $e$ also as a link in the initial graph. Notice then that $I+\{e\}\in\ciag$ because there are no loops of links in $I+\{e\}$.
We introduce new variables: 
\begin{itemize}
\item $\vec{\xi}_{e_e}$ is the coordinates of the center of the link $e$, that will correspond to the coordinates of the merged node. Explicitly, $\vec{\xi}_{e_e}=\frac{\vec{\xi}_{\tei}+\vec{\xi}_{\teo}}{|\vec{\xi}_{\tei}+\vec{\xi}_{\teo}|}$ regarded as a $3$-dimensional vector normalized to $1$. It is well defined because $\theta_{\tei\teo}\leq \epsilon_0=1$, so $\vec{\xi}_{\tei}+\vec{\xi}_{\teo}\not=0$.
\item $\phi$ is the angle between the link and some chosen axis hitched in $\vec{\xi}_{e_e}$. In every point on the sphere we choose independently this axis. Notice that $\theta_e\rd\phi$ is continuous $1$-form even if $\phi$ is not continuous function.
\end{itemize}
We perform a change of variables (by $e_e$ we denote merged node $[\tei]=[\teo]$ and $\theta_e=\theta_{\tei\teo}$)
\[
\begin{split}
 \{\epsilon_i\}_{i\notin I},\ \{\vec{\xi}_{\ti}\}_{\ti\in E_0({\cal G}_I),\: \ti\not\in\{\tei,\teo\}},&\ 
\vec{\xi}_{\tei},\vec{\xi}_{\teo}\rightarrow\\
 &\rightarrow\{\epsilon_i\}_{i\notin I},\ \{\vec{\xi}_{\ti'}\}_{\ti'\in E_0({\cal G}_{I+\{e\}}),\:  \ti'\not=e_e},\ \vec{\xi}_{e_e},\theta_e,\phi\ ,
\end{split}
\]
where we change, in fact, only $\vec{\xi}_{\tei},\vec{\xi}_{\teo}\rightarrow\vec{\xi}_{e_e},\theta_e,\phi$ and put $\vec{\xi}_{[\ti]}=\vec{\xi}_{\ti}$ for $\ti\not=\tei,\ \teo$.
The measure changes as follows
\[
 \rd^2\xi_{\tei}\rd^2\xi_{\teo}\preceq \theta_e\rd \theta_e\rd\phi\rd^2\xi_{e_e}
\]
In the new variables the integral on the region is estimated by
\[
\int \prod_{i\notin I} \rd\epsilon_i \prod_{\ti'\in E_0({\cal G}_{I+\{e\}})} \rd^2\xi_{\ti'}\: \int_0^{2\pi}\rd\phi\int_0^{\tilde{\chi}_{I+\{e\}}}\theta_e\rd\theta_e\  f_I(\chi=\theta_e,\{\epsilon_i\}_{i\notin I},\{\theta_{\ti\tj}\}_{(\ti,\tj)\in E_1({\cal G}_I)})\ ,
\]
where $\tilde{\chi}_{I+\{e\}}=\min\{\epsilon_i\colon i\notin I\}\cup\{\theta_{\tfi\tfo}\colon f\in E_1({\cal G}_I),\ \{\tfi,\tfo\}\not=\{\tei,\teo\}\}$.
For each node $\tk\in E_1({\cal G}_I)$ different from $\tei,\teo$  we have in this region
\[
 \theta_{\tk\tei}\approx\theta_{\tk e_e}\qquad \theta_{\tk\teo}\approx\theta_{\tk e_e}\ ,
\]
where $\theta_{\tk e_e}$ is the distance on the sphere to new merged node $e_e$.
Indeed, we have
\[
 \theta_{\tk e_e}\leq \theta_{\tk\tei}+\theta_{\tei e_e}=\theta_{\tk\tei}+\frac{1}{2}\theta_e\leq \frac{3}{2}\theta_{\tk\tei},\qquad
\frac{1}{2}\theta_{\tk\tei}\leq\theta_{\tk\tei}-\frac{1}{2}\theta_e\leq\theta_{\tk e_e}
\]
and similarly for $\theta_{\tk\teo}$ because $\theta_e$ is the smallest parameter among $\theta_{\tfi\tfo}$, $f\in E_1({\cal G}_I)$. A constant of equivalence ($\approx$) is equal to $2$. We obtain from lemma \ref{lemma-aprox2} that
\[
f_I(\chi,\{\epsilon_i\}_{i\notin I},\{\theta_{\ti\tj}\}_{(\ti,\tj)\in E_1({\cal G}_I)})\approx
 f_I(\chi,\{\epsilon_i\}_{i\notin I},\{\theta_{\ti\tj}=\theta_{[\ti][\tj]}\}_{([\ti],[\tj])\in E_1({\cal G}_{I+\{e\}})},\theta_{\tei\teo})
\]
In the righthand side function above we put $\theta_{[\ti][\tj]}$ in place of $\theta_{\ti\tj}$ if $\ti,\tj$ are not merged ($[\ti]\not=[\tj]$). Otherwise ($\{\ti,\tj\}=\{\tei,\teo\}$) we put $\theta_{\tei\teo}$. We can estimate ($\preceq$) the integral over the region by
\[
\begin{split}
\int \prod_{i\notin I} \rd\epsilon_i \prod_{\ti'\in E_0({\cal G}_{I+\{e\}})} \rd^2\xi_{\ti'}\: &\int_0^{2\pi}\rd\phi\int_0^{\tilde{\chi}_{I+\{e\}}}\theta_e\rd\theta_e\\
&f_I(\chi=\theta_e,\{\epsilon_i\}_{i\notin I},\{\theta_{\ti\tj}=\theta_{[\ti][\tj]}\}_{([\ti],[\tj])\in E_1({\cal G}_{I+\{e\}})},\theta_{\tei\teo}=\theta_{e}) 
\end{split}
\]
We integrate this function first over variables $\theta_e$ and $\phi$. Integration is performed over the region where $\theta_e$ is the smallest parameter ( $\theta_e\leq\min\{\epsilon_i\colon i\notin I\}\cup\{\theta_{\tfi\tfo}\colon f\in E_1({\cal G}_I),\ \{\tfi,\tfo\}\not=\{\tei,\teo\}\}$). In the new variables $\{\epsilon_k\}_{k\notin I}\cup\{\vec{\xi}_{[\ti]}\}_{[\ti]\in E_0({\cal G}_{I+\{e\}})}$ it is only condition on the range of $\theta_e$. However, it is difficult to express the function $\tilde{\chi}_{I+\{e\}}$ in terms of these variables. Nevertheless, we have 
$$
\tilde{\chi}_{I+\{e\}}\leq {2} \min\{\epsilon_i\colon i\notin I\}\cup\{\theta_{\ti'\tj'}\colon (\ti',\tj')\in E_1({\cal G}_{I+\{e\}})\}=2\chi_{I+\{e\}}\ ,
$$
where we introduced $\chi_{I+\{e\}}=\min\{\epsilon_i\colon i\notin I\}\cup\{\theta_{\ti'\tj'}\colon (\ti',\tj')\in E_1({\cal G}_{I+\{e\}})\}$.
If we spread the integration of $\theta_e$ over the region up to $2 \chi_{I+\{e\}}$ then the resulting integral will grow (we integrate a positive function). Thus, we can estimate ($\preceq$) the previous integral by
\[
\begin{split}
\int \prod_{i\notin I} \rd\epsilon_i \prod_{\ti'\in E_0({\cal G}_{I+\{e\}})} \rd^2\xi_{\ti'}\: &\int_0^{2\pi}\rd\phi\int_0^{2{\chi}_{I+\{e\}}}\theta_e\rd\theta_e\\
&f_I(\chi=\theta_e,\{\epsilon_i\}_{i\notin I},\{\theta_{\ti\tj}=\theta_{[\ti][\tj]}\}_{([\ti],[\tj])\in E_1({\cal G}_{I+\{e\}})},\theta_{\tei\teo}=\theta_{e})
\end{split}
\]
We perform the integration over $\phi$ and $\theta_e$ then for $\chi'\leq \chi_{I+\{e\}}$
\[
\begin{split}
 \underbrace{\int_0^{2\pi}\rd\phi}_{2\pi}\int_0^{2\chi'}\theta_e\rd\theta_e\  f_I(\chi=\theta_e,&\{\epsilon_i\}_{i\notin I},\{\theta_{\ti\tj}=\theta_{[\ti][\tj]}\}_{([\ti],[\tj])\in E_1({\cal G}_{I+\{e\}})}, \theta_{\tei\teo}=\theta_e)\approx\\
&\approx(\chi')^2f_I(\chi=\chi',\{\epsilon_i\}_{i\notin I},\{\theta_{\ti\tj}=\theta_{[\ti][\tj]}\}_{([\ti],[\tj])\in E_1({\cal G}_{I+\{e\}})}, \theta_{\tei\teo}=\chi').
\end{split}
\]
As before we put $\theta_{\ti\tj}=\theta_{[\ti][\tj]}$ if $\{\ti,\tj\}\not=\{\tei,\teo\}$.
One can perform first integration because the function is $\phi$ independent. We applied lemma \ref{integral} under condition that $d(I+\{e\})>0$. The resulting function is equivalent to $f_{I+\{e\}}$. The remaining integrations are as in $\I_{I+\{e\}}$ ($\vec{\xi}_{e_e}$ corresponds to the new merged node $[\tei]=[\teo]$ of ${\cal G}_{I+\{e\}}$)
\[
\begin{split}
\int \prod_{i\notin I} \rd\epsilon_i \prod_{\ti'\in E_0({\cal G}_{I+\{e\}})} \rd^2\xi_{\ti'}\ 
\chi_{I+\{e\}}^2f_I(\chi&=\chi_{I+\{e\}},\{\epsilon_i\}_{i\notin I},\\
&\{\theta_{\ti\tj}=\theta_{[\ti][\tj]}\}_{([\ti],[\tj])\in E_1({\cal G}_{I+\{e\}})},\theta_{\tei\teo}=\chi_{I+\{e\}})
\end{split}
\]
So we obtain an estimate ($\preceq$) by the integral $\I_{I+\{e\}}$ on this region.

In the last case, integrations are performed over the region where parameters in the function are separated from $0$ (are bigger then $\epsilon_0=1$). In this case we integrate a bounded function on the bounded region, so we can estimate this integral by a constant.

The integral $\I_I$ is estimated ($\preceq$) by the sum of the integrals over the regions, so by the sum of $\I_J$ for all $J$ being one step prolongations of $I$ and a constant.
\end{proof}

We have
\begin{lm}\label{cond-d}
Integral \ref{int-simpl} is convergent if
\begin{equation}
 \forall_{I\in \ciag, I\not=\emptyset}\ d(I)>0\ .
\end{equation}
\end{lm}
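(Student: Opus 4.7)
The plan is to apply Lemma \ref{lm-est1} iteratively, establishing by downward induction on $|I|$ that $\I_I < \infty$ for every $I \in \ciag$. Since $\I_\emptyset$ coincides with the integral \ref{int-simpl} by the very definition of $f_\emptyset$, specialising the conclusion to $I = \emptyset$ yields the required convergence.

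The first step is to bound the length of sequences in $\ciag$. Any $I \in \ciag$ contributes at most $|E_0| - 1$ distinct nodes (from $E_0 \setminus \{0\}$), and its links must form a forest in the initial graph, hence number at most $|E_0| - 1$ as well. Consequently $|I| \leq 2(|E_0| - 1)$, giving a well-defined maximum length and a valid starting point for the downward induction.

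The base case is $I$ of maximum length. Then no one-step prolongation of $I$ exists, so the hypothesis of Lemma \ref{lm-est1} (that $d(J) > 0$ for every one-step prolongation $J$) is vacuously satisfied, and the sum on its right-hand side is empty, giving $\I_I \preceq 1$. Concretely, in the decomposition used inside the proof of Lemma \ref{lm-est1}, the first two cases (smallest parameter an $\epsilon_k$ with $k \notin I$, or a $\theta_{\tei\teo}$ with $e \in E_1({\cal G}_I)$) cannot occur, leaving only the bounded contribution from the region where $\epsilon_0 = 1$ is minimal. For the inductive step, suppose $\I_J < \infty$ for every $J \in \ciag$ with $|J| > k$, and let $|I| = k$. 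Every one-step prolongation $J$ has $|J| = k+1 \geq 1$, so $J \neq \emptyset$ and therefore $d(J) > 0$ by the hypothesis of the lemma being proved. Applying Lemma \ref{lm-est1} gives $\I_I \preceq 1 + \sum_{|J| = k+1} \I_J < \infty$. Taking $k = 0$ and $I = \emptyset$ produces $\I_\emptyset < \infty$.

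There is essentially no substantive obstacle here, since Lemma \ref{lm-est1} already encapsulates the analytic heart of the argument. The only items requiring care are the combinatorial length bound on $\ciag$ and the observation that at maximum length the hypothesis of Lemma \ref{lm-est1} is satisfied vacuously; both are elementary. The overall convergence of \ref{int-simpl} then follows immediately from the $k = 0$ case of the induction.
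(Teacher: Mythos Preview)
Your proof is correct and follows essentially the same downward-induction argument as the paper: bound the maximal length of sequences in $\ciag$ by $2(|E_0|-1)$, note that Lemma~\ref{lm-est1} applies at each stage because every nonempty $J$ has $d(J)>0$, and conclude $\I_\emptyset<\infty$. You spell out a couple of points (the forest bound on the number of links, and that the hypothesis of Lemma~\ref{lm-est1} is vacuous at maximal length) a bit more explicitly than the paper does, but the structure is identical.
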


\begin{proof}
We prove it by induction. Let $\kappa=\max \{|I|\colon I\in\ciag\}=2(|E_0|-1)$. According to lemma \ref{lm-est1} every $\I_I$ for $|I|=\kappa$ is estimated by a constant, so is finite (in fact, this is an integral of constant over the sphere). Suppose we have proved finiteness for all $I\in\ciag$ with $|I|=k$, then for every $J\in \ciag$, $|J|=k-1$ we have by lemma \ref{lm-est1}
\[
 \I_J\preceq 1+\sum_{|I|=k} \I_I,
\]
but the righthand side is finite. By induction we prove that $\I_\emptyset$ is finite.
\end{proof}

\subsection{Graphs}\label{sec-graph}

In this subsection we will prove that $d(I)>0$ for the given sequence $\emptyset\not=I\in\ciag$. 

\begin{df}
The function $\tilde{f}_I(\chi, \{\epsilon_i\}_{i\notin I}, \{\theta_{\ti\tj}\}_{(\ti,\tj)\in E_1({\cal G}_I)})$ is defined as follows
\begin{equation}
 \begin{split}
 \tilde{f}_I(\chi, \{\epsilon_i\}_{i\notin I}&, \{\theta_{\ti\tj}\}_{(\ti,\tj)\in E_1({\cal G}_I)})\\
=f_\emptyset\big(\chi, &\{\epsilon_i\}_{i\notin I},\ \{\epsilon_i=\chi\}_{i\in I},\  \{\theta_{ij}=\theta_{[i][j]}\}_{([i],[j])\in E_1({\cal G}_I)},\ \{\theta_{ij}=\chi\}_{[i]=[j]\ {\rm in\ } E_0({\cal G}_I)}\big)\ . 
 \end{split}
\end{equation}
Explicitly, in function $f_\emptyset$ we put 
\begin{itemize}
\item $\chi$ in place of every $\epsilon_i$ for node $i\in I$,
\item $\chi$ in place of every $\theta_{ij}$ for nodes $i$ and $j$ in the same $\ti\in E_0({\cal G}_I)$ ($[i]=[j]$),
\item $\theta_{[i][j]}$ in place of $\theta_{ij}$ ($[i]$ and $[j]$ are the nodes of ${\cal G}_I$ that contain $i$ and $j$ respectively) if $[i]\not=[j]$.
\end{itemize}
\end{df}

In fact, we have $f_I=\chi^{\rm some\ power}\tilde{f_I}$. Essential part of this property is proved in the following lemma.

\begin{lm}\label{lemma-aux}
The value $d(I)=\Val_\chi f_I$ for $I\in\ciag$ is equal to
\[
 d(I)=\Val_\chi \tilde{f}_I + V+\sum_{\ti\in E_0({\cal G}_I)} 2(N_{\ti}-1)\ ,
\]
where $V$ is the number of nodes belonging to $I$ and $N_{\ti}$ is the number of nodes in $\ti\in E_0({\cal G}_I)$.
\end{lm}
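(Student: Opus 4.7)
The plan is to reduce the lemma to a single clean bookkeeping identity, proved by induction on $|I|$. Namely, let $L$ denote the number of links occurring in the sequence $I$. I would show that, as functions of $\chi$ together with the remaining free variables $\{\epsilon_i\}_{i\notin I}$ and $\{\theta_{\ti\tj}\}_{(\ti,\tj)\in E_1({\cal G}_I)}$,
\begin{equation}
f_I \;=\; \chi^{\,V + 2L}\,\tilde f_I .
\end{equation}
Taking $\Val_\chi$ of both sides and using $\Val_\chi(\chi^a g)=a+\Val_\chi g$ immediately gives $d(I)=\Val_\chi\tilde f_I+V+2L$, and then the lemma follows once we identify $2L=\sum_{\ti}2(N_{\ti}-1)$.

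The induction has base case $I=\emptyset$, where $V=L=0$, ${\cal G}_\emptyset$ is the original graph and $f_\emptyset=\tilde f_\emptyset$ by direct inspection. For the inductive step there are two cases, matching the two ways an element of $\ciag$ can be extended. If $J=I+\{k\}$ with $k$ a node, then ${\cal G}_J={\cal G}_I$, so the angle-variable part of $\tilde f_J$ is identical to that of $\tilde f_I$ and the only new substitution is $\epsilon_k=\chi$; comparing with the defining relation $f_J=\chi\,f_I|_{\epsilon_k=\chi}$ and the inductive hypothesis, the identity is preserved with $V\to V+1$, $L$ unchanged. If $J=I+\{e\}$ with $e$ a link, then ${\cal G}_J$ is obtained from ${\cal G}_I$ by merging $\tei,\teo$ into $e_e$ and deleting the resulting self-loops; a direct case analysis on the three possibilities for an original edge $e'\in E_1$ (both endpoints already in the same merged node of ${\cal G}_I$; endpoints split between $\tei$ and $\teo$; at most one endpoint in $\{\tei,\teo\}$) shows that the substitution $\theta_{\tei\teo}=\chi$ together with the renaming $\theta_{\ti\tj}\mapsto\theta_{[\ti][\tj]}$ converts $\tilde f_I$ into $\tilde f_J$. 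Combining with $f_J=\chi^2 f_I|_{\theta_{\tei\teo}=\chi,\text{ renamed}}$ and the inductive hypothesis, the identity is preserved with $V$ unchanged and $L\to L+1$.

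Finally, to identify $2L$ with $\sum_{\ti}2(N_{\ti}-1)$, I would use the defining property of $\ciag$ that the links in $I$ form a forest, so each new link strictly merges two distinct components of the previously coarsened graph; starting from $|E_0({\cal G}_\emptyset)|=|E_0|$ this gives $|E_0({\cal G}_I)|=|E_0|-L$, and since $\sum_{\ti}N_{\ti}=|E_0|$, we obtain $L=\sum_{\ti}(N_{\ti}-1)$. The only nontrivial step is the link-addition case of the induction, where one must be careful that several variables $\theta_{\tei\tk}$ and $\theta_{\teo\tk}$ may collapse to a single $\theta_{e_e\tk}$ after merging; however this collapse is respected identically by both $f_I$ and $\tilde f_I$ (they are built from the same $f_\emptyset$ by compatible substitutions), so the identity goes through cleanly.
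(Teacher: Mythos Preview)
Your proof is correct and follows essentially the same inductive strategy as the paper: both argue that $f_I=\chi^{\text{(power)}}\tilde f_I$ by induction on $|I|$, with the power increasing by $1$ for a node step and by $2$ for a link step. The only organizational difference is that you track the power as $V+2L$ and then convert $L=\sum_{\ti}(N_{\ti}-1)$ at the end via the forest property of the links in $I$, whereas the paper tracks $V+\sum_{\ti}2(N_{\ti}-1)$ directly and verifies in the link step that merging $\tei,\teo$ raises $\sum_{\ti}2(N_{\ti}-1)$ by $2$ through $N_{e_e}=N_{\tei}+N_{\teo}$. These are equivalent bookkeepings of the same induction.
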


\begin{proof}
In order to obtain $f_I$ we need to put $\chi$ in function $f_\emptyset$ as it is done in $\tilde{f}_I$, but also multiply the function by suitable power of $\chi$.
We prove by induction on the sequences that this additional power factor is $\chi^{V+\sum_{\ti\in E_0({\cal G}_I)} 2(N_{\ti}-1)}$, then $d(I)=\Val_\chi f_I=\Val_\chi \tilde{f}_I+V+\sum_{\ti\in E_0({\cal G}_I)} 2(N_{\ti}-1)$.

If we add a node to $I$ then the additional factor is multiplied by $\chi$, but also $V+\sum_{\ti\in E_0({\cal G}_I)} 2(N_{\ti}-1)$ increases by 1 ($V$ increases by $1$).

If we add a link (say $e$) to $I$ then this factor increases by $\chi^2$, but
we merge two nodes $\tei$ and $\teo$ of ${\cal G}_I$. The merged node is $e_e=[\tei]=[\teo]$ and we have $N_{e_e}=N_{\tei}+N_{\teo}$, so
\[
 2(N_{e_e}-1)=2(N_{\tei}-1)+2(N_{\teo}-1)+2.
\]
We obtain
\[
 V+\sum_{\ti'\in E_0({\cal G}_{I+\{e\}})} 
2(N_{\ti'}-1)=V+
\sum_{\ti\in E_0({\cal G}_I)} 2(N_{\ti}-1)+2\ .
\]
To finish the inductive procedure we need to check that the formula is valid for the empty sequence $I=\emptyset$, but in this case $V+\sum_{\ti\in E_0({\cal G}_I)} 2(N_{\ti}-1)=0$.
\end{proof}

We draw the initial graph with coloured nodes that belong to $I$ and colouring links that  connect two coloured nodes that belong to the same $\tj\in E_0({\cal G}_I)$. Coloured nodes and coloured links form \emph{the coloured graph}. It is associated with the sequence $I$.

\begin{lm}\label{lemma-aux2}
The value $\Val \tilde{f}_I$ can be computed as the sum
\begin{equation}
 \sum_{i\in E_0} d(I)_i+\sum_{e\in E_1} d(I)_e\ ,
\end{equation}
where $d(I)_i$ is equal
\begin{itemize}
 \item $-3$ for coloured node $i$,
\item $0$ for uncoloured node $i$,
\end{itemize}
and $d(I)_e$ is equal
\begin{itemize}
 \item $0$ for coloured link $e$,
\item $0$ for uncoloured link $e$ that connects two uncoloured nodes,
\item $1-\tau$ for uncoloured link $e$ connecting one coloured node with uncoloured one,
\item $2(1-\tau)$ for uncoloured link $e$ that connects two coloured nodes.
\end{itemize}
\end{lm}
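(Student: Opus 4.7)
The plan is to exploit additivity of $\Val_\chi$ on products. For $f,g\in\Psi$, the defining limit gives $\Val_\chi(fg)=\Val_\chi f+\Val_\chi g$ immediately, since $\lim_{\chi\to 0}(fg)/\chi^{\Val f+\Val g}$ factorises into two nonzero finite limits. By construction $\tilde{f}_I$ is a product of one node factor $\epsilon_i^{-3}$ for each $i\in E_0$ and one link factor $\bigl(\epsilon_{\ei}\epsilon_{\eo}/(\epsilon_{\ei}^2+\epsilon_{\eo}^2+\theta_{\ei\eo}^2)\bigr)^{1-\tau}$ for each $e\in E_1$, each with the substitutions from the definition of $\tilde{f}_I$ applied. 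Therefore it suffices to compute $\Val_\chi$ of each individual factor and match it with $d(I)_i$, $d(I)_e$.

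First I would handle the node factors. The substitution sends $\epsilon_i\mapsto \chi$ exactly when $i\in I$, i.e.\ when $i$ is coloured. In that case the factor becomes $\chi^{-3}$, giving $\Val_\chi=-3$; otherwise $\epsilon_i$ remains a free positive variable, the factor is $\chi$-independent, and $\Val_\chi=0$. This matches $d(I)_i$.

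Next I would do a four-case analysis on the link factor, keyed to the coloured/uncoloured classification. A crucial preliminary observation is that a link $e$ is coloured iff both endpoints are coloured and $[\ei]=[\eo]$ in $E_0({\cal G}_I)$; consequently, for an \emph{uncoloured} link with both endpoints coloured one automatically has $[\ei]\neq[\eo]$, so that $\theta_{\ei\eo}$ is substituted by the free variable $\theta_{[\ei][\eo]}$ rather than by $\chi$. Granted this, applying the substitutions to the link factor yields: (a) if $e$ is coloured, all three of $\epsilon_{\ei},\epsilon_{\eo},\theta_{\ei\eo}$ become $\chi$ and the factor equals $(1/3)^{1-\tau}$, giving $\Val_\chi=0$; (b) if $e$ is uncoloured and both endpoints are uncoloured, nothing becomes $\chi$ and $\Val_\chi=0$; (c) if $e$ is uncoloured with exactly one coloured endpoint, the numerator contributes $\chi^{1-\tau}$ while the denominator tends to the nonzero limit $\epsilon_{\eo}^2+\theta_{[\ei][\eo]}^2$, giving $\Val_\chi=1-\tau$; (d) if $e$ is uncoloured with both endpoints coloured, the numerator contributes $\chi^{2(1-\tau)}$ while the denominator again tends to a nonzero limit $\theta_{[\ei][\eo]}^2>0$, giving $\Val_\chi=2(1-\tau)$.

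Summing these contributions and invoking additivity gives the claimed expression for $\Val_\chi\tilde{f}_I$. There is no serious obstacle; the only point needing care is the bookkeeping in step (d)---namely the remark that an uncoloured link between two coloured nodes must have $[\ei]\neq[\eo]$, since otherwise the definition would substitute $\theta_{\ei\eo}\mapsto\chi$ and then the denominator would collapse to $3\chi^2$, which would spuriously cancel the $\chi^2$ in the numerator and give the wrong value. Once this point is cleanly separated from the genuinely coloured case (a), the rest is a direct computation of limits.
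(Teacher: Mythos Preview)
Your proof is correct and follows essentially the same approach as the paper: both use additivity of $\Val_\chi$ over products, dispatch the node factors identically, and then analyse the link factors case by case. The only organisational difference is that the paper splits each link factor further into the numerator $(\epsilon_{\ei}\epsilon_{\eo})^{1-\tau}$ and the denominator $(\epsilon_{\ei}^2+\epsilon_{\eo}^2+\theta_{\ei\eo}^2)^{-(1-\tau)}$, computing $\Val_\chi$ of each separately, whereas you treat the whole link factor at once; the resulting bookkeeping is equivalent. One small imprecision: in your cases (b) and (c) you write ``nothing becomes $\chi$'' and give the limiting denominator as $\epsilon_{\eo}^2+\theta_{[\ei][\eo]}^2$, but it can happen that $[\ei]=[\eo]$ even when not both endpoints are coloured (two uncoloured nodes, or one coloured and one uncoloured, can sit in the same merged class via links of $I$), in which case $\theta_{\ei\eo}$ is also replaced by $\chi$; however the denominator still has the nonzero limit $\epsilon_{\ei}^2+\epsilon_{\eo}^2$ (resp.\ $\epsilon_{\eo}^2$), so your conclusions $\Val_\chi=0$ and $\Val_\chi=1-\tau$ remain valid.
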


\begin{proof}
The value $\Val_\chi \tilde{f}_I$ is the sum of $\Val_\chi$ for factors of $\tilde{f}_I$.

For every node $i\in E_0$ we have factor $\epsilon_i^{-3}$ (if $i\notin I$ and then $\Val_\chi=0$) or $\chi^{-3}$ (if $i\in I$ then $\epsilon_i=\chi$ and $\Val_\chi=-3$). This agrees with the lemma.

For each link we have a factor $\left(\frac{\epsilon_{\ei}\epsilon_{\eo}}{\epsilon_{\ei}^2+\epsilon_{\eo}^2+\theta_{\ei\eo}^2}\right)^{1-\tau}$ (with suitable substitutions by $\chi$).
Let us divide the factor coming from the link into two pieces $\frac{1}{(\epsilon_{\ei}^2+\epsilon_{\eo}^2+\theta_{\ei\eo}^2)^{1-\tau}}$ and $\epsilon_{\ei}^{1-\tau}\epsilon_{\eo}^{1-\tau}$. For the first piece, there are two options:
\begin{itemize}
 \item if nodes $\ei$, $\eo$ are coloured and $[\ei]=[\eo]$ (link is coloured) then it gives $\Val_\chi=-2(1-\tau)$,
\item otherwise (link is uncoloured) it gives an input of $\Val_\chi=0$.
\end{itemize}
The input of factor $\epsilon_{\ei}^{1-\tau}\epsilon_{\eo}^{1-\tau}$ is proportional to the number of coloured nodes among $\ei$, $\eo$ and is $\Val_\chi=0$ (if neither of the two is coloured), $\Val_\chi=1-\tau$ (if only one is coloured), $\Val_\chi=2(1-\tau)$ (if both are coloured).
This agrees with inputs stated in the lemma 
(if link is coloured then it contributes $-2(1-\tau)+2(1-\tau)=0$, otherwise only input of $\epsilon_{\ei}^{1-\tau}\epsilon_{\eo}^{1-\tau}$ counts).
\end{proof}

Now we associate with each connected component $C_\mu$ of the coloured graph a number ${d(C_\mu)}$ in such a way that 
\[
d(I)\geq \sum_\mu d(C_\mu) \ ,
\]
where we sum over all connected components of the coloured graph. We will label components by $C$ with subscript $\mu,\nu,\ldots$. At the end we will prove that for each connected component $d(C_\mu)$ is positive.

\begin{lm}\label{lemma-aux3}
Let for each connected component $C_\mu$ of the coloured graph 
\begin{equation}
d(C_\mu)=-2+E_\mu(1-\tau),
\end{equation}
where $E_\mu$ is the number of uncoloured links that join the connected component $C_\mu$ with the rest of the initial graph. Then
\begin{equation}
 d(I)\geq \sum_\mu d(C_\mu)\ ,
\end{equation}
where we sum over all connected components of the coloured graph.
\end{lm}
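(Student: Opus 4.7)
The plan is to write both $d(I)$ and $\sum_\mu d(C_\mu)$ in a common form using Lemmas \ref{lemma-aux} and \ref{lemma-aux2}, observe that the contributions proportional to $1-\tau$ cancel exactly, and reduce the claim to a purely combinatorial inequality about how coloured components sit inside merged classes. The number of coloured nodes equals $V$ (the quantity appearing in Lemma \ref{lemma-aux}). Denoting by $E_{cu}$ and $E_{cc}$ the numbers of uncoloured links with, respectively, exactly one and exactly two coloured endpoints, the two lemmas combine to give
\[
d(I) = -2 V + 2 \sum_{\ti \in E_0({\cal G}_I)} (N_{\ti} - 1) + (1-\tau)\bigl(E_{cu} + 2 E_{cc}\bigr).
\]

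For the right-hand side the key structural observation is that every coloured component $C_\mu$ is contained in a single merged class, since by definition each coloured link joins two coloured nodes lying in the same $\tj \in E_0({\cal G}_I)$. Consequently every link between $C_\mu$ and its complement is uncoloured, and classifying such links by the number of their coloured endpoints gives $\sum_\mu E_\mu = E_{cu} + 2 E_{cc}$: an uncoloured link with one coloured endpoint contributes to exactly one $E_\mu$, while an uncoloured link with two coloured endpoints has those endpoints in distinct merged classes (otherwise the link itself would be coloured) and hence in distinct coloured components, so it contributes to two $E_\mu$'s. Writing $N_c$ for the total number of coloured components, we therefore obtain $\sum_\mu d(C_\mu) = -2 N_c + (1-\tau)(E_{cu} + 2 E_{cc})$, and subtracting,
\[
d(I) - \sum_\mu d(C_\mu) = 2 \left[\, N_c + \sum_{\ti \in E_0({\cal G}_I)} (N_{\ti} - 1) - V \,\right].
\]

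It remains to prove the combinatorial inequality $N_c + \sum_{\ti}(N_{\ti} - 1) \geq V$. Since each coloured component lives inside a single merged class, both $V$ and $N_c$ split as sums over $\ti$ of the number $V_{\ti}$ of coloured nodes in $\ti$ and the number $N_c(\ti)$ of coloured components inside $\ti$, so it is enough to check $N_c(\ti) + (N_{\ti} - 1) \geq V_{\ti}$ for every merged class $\ti$. This is immediate by case analysis: if $V_{\ti} = 0$ the inequality reduces to $N_{\ti} \geq 1$, and if $V_{\ti} \geq 1$ then $N_c(\ti) \geq 1$ and $V_{\ti} \leq N_{\ti}$, giving $N_c(\ti) + (N_{\ti} - 1) \geq 1 + V_{\ti} - 1 = V_{\ti}$. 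The main place where care is required is the bookkeeping in the middle paragraph, i.e.\ verifying that an uncoloured link with two coloured endpoints necessarily connects two \emph{different} coloured components so that it is counted with the correct factor of $2$ in $E_{cu} + 2 E_{cc}$; once this is in place the rest of the argument is routine, and one notices that 3-edge-connectedness plays no role here (it enters only afterwards, to make each $d(C_\mu)$ itself positive).
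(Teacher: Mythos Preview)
Your proof is correct and follows essentially the same route as the paper: both compute the link contribution $\sum_e d(I)_e = \sum_\mu E_\mu(1-\tau)$ by the same case analysis on uncoloured links, then reduce the remaining terms to a combinatorial inequality comparing coloured components with merged classes. The only substantive difference is in that last step: the paper asserts that each merged class $\ti$ contains \emph{at most one} coloured component and uses this to get $2(N_\mu-1)\le 2(N_{\ti}-1)$ term by term, whereas you allow several components per class and prove the slightly more general per-class inequality $N_c(\ti)+(N_{\ti}-1)\ge V_{\ti}$; since the paper's uniqueness claim is not entirely obvious (intermediate nodes on an $I$-link path inside $\ti$ need not themselves be coloured), your version is arguably the cleaner justification of the same bound.
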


\begin{proof}
According to lemma \ref{lemma-aux} we have
$d(I)=\Val_\chi\tilde{f}_I+V+\sum_{\ti\in E_1({\cal G}_I)} 2(N_{\ti}-1)$.

For a given $\ti\in E_0({\cal G}_I)$, the set of coloured nodes that belong to $\ti$ together with all links connecting them (if nonempty) form a connected component of the coloured graph. This is the unique connected component $C_\mu$ in $\ti$ ($E_0(C_\mu)\subset \ti$). In the set~$\ti$ there might be other nodes that do not belong to the coloured graph. Thus, for the node $\ti\in E_0({\cal G}_I)$ holds
\[
 2(N_\mu-1)\leq 2(N_{\ti}-1),
\]
where $N_\mu$ is the number of nodes of the initial graph in connected component $\mu$ and $N_{\ti}$ is the number of nodes in $\ti$. If we sum over all nodes $\ti\in E_0({\cal G}_I)$
\[
 \sum_\mu 2(N_\mu-1)\leq \sum_{\ti\in E_0({\cal G}_I)} 2(N_{\ti}-1)
\]
then the summation on the left-hand side is performed over all connected components of the coloured graph.

Let us compute $\Val_\chi \tilde{f}_I$ (see lemma \ref{lemma-aux2}).
All nodes contribute $\sum_{i\in E_0}d(I)_i=-3V$, where $V$ is the number of coloured nodes.
The contribution of links is as follows. For uncoloured link there are three possibilities
\begin{itemize}
\item it connects two different connected components (say nodes $i\in C_\mu$ and $j\in C_\nu$) and then provides  $\Val_\chi={2(1-\tau)}$ (and is counted twice: in $E_\mu$ and $E_\nu$),
\item it connects a connected component (say node $i\in C_\mu$) with an uncoloured node and then provides $\Val_\chi={1-\tau}$ (and is counted once: in $E_\mu$),
\item it connects two uncoloured nodes and then provides $\Val_\chi=0$ and is not counted in any $E_\mu$.
\end{itemize}
For every coloured link $\Val_\chi=0$ and these links are not counted in any $E_\mu$. So we see, that $\Val_\chi$ contributed by all links is equal to
\[
 \sum_{e\in E_1} d(I)_e=\sum_\mu E_\mu(1-\tau)\ .
\]

According to lemma \ref{lemma-aux} we have
\begin{equation}\label{eq-aux1}
 d(I)=\Val_\chi\tilde{f}_I+V+\sum_{\ti\in E_0({\cal G}_I)} 2(N_{\ti}-1)
\geq -3V+\sum_{\mu} E_\mu(1-\tau)+ V+\sum_\mu 2(N_\mu-1),
\end{equation}
where we sum over all connected components $\mu$ of the coloured graph.
But we have $V=\sum_\mu N_\mu$, so
\[
 -3V+V+\sum_\mu 2(N_\mu-1)=\sum_\mu (-3N_\mu+N_\mu+2N_\mu-2)=\sum_\mu (-2)\ .
\]
Substituting this expression to the equation \ref{eq-aux1} we obtain
\[
 d(I)\geq \sum_\mu \left(-2+E_\mu(1-\tau)\right)\ .
\]
\end{proof}

We note that the coloured graph is not the whole graph (node $0$ does not belong to it).

Suppose first that it is also a non-empty graph. From the property of 3-edge-connectedness (applied to each connected component $C_\mu$ separately) we see that $E_\mu\geq 3$.
Because $E_\mu\tau\leq\frac{1}{2}$ (due to the equation \ref{tau-def}) we have
\[
 -2+E_\mu-E_\mu\tau>0\ .
\]
Application of lemma \ref{lemma-aux3} gives $d(I)>0$ in this case.

If the coloured graph is empty then it means that $I$ consists only of links. We see that $\Val_\chi \tilde{f}_I=0$ and $V=0$ (the number of nodes in $I$). There exists at least one set $\ti\in E_0({\cal G}_I)$ with number of elements $N_{\ti}>1$ (because $I$ is a nonempty sequence), so $d(I)>0$.

Theorem \ref{tech} follows now from lemma \ref{cond-d}.

\section{Conclusions and outlook}

We have proved that all 3-edge-connected spin-networks with the BC or EPRL intertwiners are integrable (corollaries \ref{BC-fin-thm} and \ref{EPRL-fin-thm}). By the Barrett-Crane procedure \cite{BC,BB} one can associate with them a finite evaluation. This allows a definition of the vertex amplitude in the EPRL and BC models for non-simplicial decompositions.
Our proof is valid for a larger class of spin-networks with nodes labelled by elements of $[{\cal F}]$.

Some drawbacks of our proof should also be stressed. In this form it gives very weak estimates on the behaviour of the evaluation with respect to the representations. Namely, as a function of $\rho_{e_1},\ldots, \rho_{e_n}$ in the BC case we have proved only that it is bounded. This is much weaker than estimates obtained in \cite{Rovelli,Cherrington} for special graphs.

\begin{acknowledgments}
Author would like to thank Frank Hellmann,  Marcin Kisielowski  and Jerzy Lewandowski for inspiring discussions on the subject. Paul Hunt, Ludmi{\l}a Janion and especially Micha{\l} Dziendzikowski are thanked for valuable comments on the earlier version of this paper.
This work was supported by grant N N202 287538 of the Polish Ministry of Science and Higher Education (Minsterstwo Nauki i Szkolnictwa Wy\.zszego) and grant \emph{Master} of the Foundation for Polish Science (Fundacja na Rzecz Nauki Polskiej FNP).
\end{acknowledgments}

\appendix
\section{Facts about the invariants}\label{inv-sec}

There is no normalizable invariant vector in the tensor product of nontrivial, irreducible, unitary representations $V_{1}\otimes\cdots\otimes V_{n}$. In fact such a vector would occur in the direct integral decomposition into irreducible representations as an atom in the measure. However, this decomposition \cite{Naimark} consists only (up to measure zero) of representations from the principal and complementary series, where the trivial representation is not present.

In contrast, we will prove that for every $\iota\in{\cal F}(V_{k_1,\rho_1}\otimes\cdots\otimes V_{k_n,\rho_n})$ ($n\geq 3$) the invariant \ref{F-map}
\[
 \int_{SL(2,{\mathbb C})} \rd g\: T_{k_1,\rho_1}(g)\otimes\cdots\otimes T_{k_n,\rho_n}(g) \iota
\]
is defined by duality as a functional on $\cal F$. This means that the integral
\begin{equation}\label{inv-eq}
\int_{SL(2,{\mathbb C})} \rd g \left(\iota',\ T_{k_1,\rho_1}(g)\otimes\cdots\otimes T_{k_n,\rho_n}(g) \iota \right)
\end{equation}
is convergent for every $\iota'\in{\cal F}\left(V_{k_1,\rho_1}^*\otimes\cdots\otimes V_{k_n,\rho_n}^*\right)$.
This expression is the evaluation of the spin-network consisting of two nodes (labelled by $\iota$ and $\iota'$) and $n$ links (labelled by $(k_i,\rho_i)$, $i=1,\ldots, n$) connecting them. According to theorem \ref{fin-thm} (see also \cite{BB}) it is finite if $n\geq 3$.

Now, we prove Fact \ref{thm-label}. To do this it is enough to check that the evaluation is equal zero for a 3-edge-connected spin-network with the label of node $1$  equivalent to $0$ (that is $[\iota_1]=[0]$) and the remaining labels $\{\iota_i\}_{i=0,2,\ldots, |E_0|}$ being simple tensors. General case follows then by linearity. In order to compute the evaluation, we fix $0$ as the node that we are not integrating over. The evaluation does not depend on this choice \cite{BB}. We perform first the integration over $g_1$. The integral is absolutely convergent, so the result does not depend on the order of integration. We obtain $0$ because in the integral we have factor equal to \ref{inv-eq} with $\iota=\iota_1$ (notice that we can reverse orientation of the edges using isomorphisms $V_{k_e,\rho_e}\approx (V_{k_e,\rho_e})^*$).

\end{document}